\documentclass[a4paper,11pt,reqno]{amsart}

\usepackage{amsaddr}
\usepackage[margin=1in,includehead,includefoot]{geometry}
\usepackage[utf8]{inputenc}
\usepackage[T1]{fontenc}
\usepackage{lmodern,microtype}
\usepackage{upref}
\usepackage{mathtools,amssymb,amsthm}
\usepackage{dsfont}
\usepackage{hyperref}
\usepackage{algpseudocode}


\makeatletter
\let\old@setaddresses\@setaddresses
\def\@setaddresses{\bigskip{\parindent 0pt\let\scshape\relax\let\ttfamily\relax\old@setaddresses}}
\makeatother


\newtheorem{definition}{Definition}
\newtheorem{theorem}{Theorem}
\newtheorem{lemma}[theorem]{Lemma}


\usepackage[ruled, vlined]{algorithm2e}
\usepackage{wrapfig}

\newcommand{\NP}{\mathsf{NP}}
\newcommand{\NL}{\mathsf{NL}}
\newcommand{\CoNP}{\mathsf{coNP}}
\newcommand{\UP}{\mathsf{UP}}
\newcommand{\CoUP}{\mathsf{coUP}}
\newcommand{\Poly}{\mathsf{P}}
\newcommand{\CLS}{\mathsf{CLS}}
\newcommand{\PLS}{\mathsf{PLS}}
\newcommand{\UEOPL}{\mathsf{UniqueEOPL}}
\newcommand{\PSPACE}{\mathsf{PSPACE}}

\newcommand{\dbar}{\overline{d}}

\newcommand{\dest}{\{d, \, \dbar\}}
\newcommand{\arr}{ARRIVAL}


\newcommand{\N}{\mathbb{N}}

\newcommand{\bigO}{O}
\newcommand{\floor}[1]{\left\lfloor #1 \right\rfloor}
\newcommand{\ceil}[1]{\left\lceil #1 \right\rceil}
\newcommand{\dist}{\mathrm{dist}}


\linespread{1.09}

\hypersetup{
  pdftitle={A subexponential algorithm for \arr}
  pdfauthor={Bernd G\"artner, Sebastian Haslebacher, Hung Phuc Hoang}
}

\title{A subexponential algorithm for \arr}

\author{Bernd G\"artner, Sebastian Haslebacher, and Hung P. Hoang}
\address{Department of Computer Science, ETH Z\"urich, Switzerland \\ {\tt \{gaertner, sebastian.haslebacher, hung.hoang\}@inf.ethz.ch}}

\begin{document}

\begin{abstract}
  The \arr\ problem is to decide the fate of a train moving along the
  edges of a directed graph, according to a simple (deterministic)
  pseudorandom walk. The problem is in $\NP\cap\CoNP$ but not known to
  be in $\Poly$. The currently best algorithms have runtime
  $2^{\Theta(n)}$ where $n$ is the number of vertices. This is not
  much better than just performing the pseudorandom walk. We develop a
  subexponential algorithm with runtime $2^{\bigO(\sqrt{n}\log n)}$. We
  also give a polynomial-time algorithm if the graph is almost
  acyclic. Both results are derived from a new general approach to
  decide \arr\ instances.
\end{abstract}

\keywords{Pseudorandom walks, reachability, graph games, switching systems}

\maketitle

\section{INTRODUCTION}
\label{sec:intro}
Informally, the \arr\ problem is the following (we quote from Dohrau et
al.~\cite{Dohrau2017}):
\begin{quotation}
 Suppose that a train is running along a railway network, starting
  from a designated origin, with the goal of reaching a designated
  destination. The network, however, is of a special nature: every
  time the train traverses a switch, the switch will change its
  position immediately afterwards. Hence, the next time the train
  traverses the same switch, the other direction will be taken, so
  that directions alternate with each traversal of the switch.

  Given a network with origin and destination, what is the
  complexity of deciding whether the train, starting at the origin,
  will eventually reach the destination?
\end{quotation}

\arr\ is arguably the simplest problem in $\NP\cap\CoNP$ that
is not known to be in $\Poly$. Due to its innocence and at the same
time unresolved complexity status, \arr\ has
attracted quite some attention recently. The train run can be
interpreted as a deterministic simulation of a random walk that
replaces random decisions at a switch by perfectly fair
decisions. Such pseudorandom walks have been studied before under the
names of \emph{Eulerian walkers}~\cite{Priezzhev1996},
\emph{rotor-router walks}~\cite{Holroyd2010}, and \emph{Propp
  machines}~\cite{Cooper2007}. The reachability question as well as
$\NP$ and $\CoNP$ membership are due to Dohrau et
al.~\cite{Dohrau2017}.

Viewed somewhat differently, \arr\ is a \emph{zero player game} (a process
that runs without a controller); in contrast, three other well-known
graph games in $\NP\cap\CoNP$ that are not known to be in $\Poly$ are
two-player (involving two controllers). These are \emph{simple
  stochastic games}, \emph{mean-payoff games} and \emph{parity
  games}~\cite{Con,PZ,Jur}. Moreover, it is stated in (or easily seen
from) these papers that the one-player variants (the strategy of one
controller is fixed) have polynomial-time algorithms. In light
of this, one might expect a zero-player game such as \arr\ to be \emph{really}
simple. But so far, no polynomial-time algorithm could be found.

On the positive side, the $\NP\cap\CoNP$ complexity upper bound could
be strengthened in various ways. \arr\ is in $\UP\cap\CoUP$, meaning
that there are efficient verifiers that accept \emph{unique}
proofs~\cite{Gaertner2018}. A search version of \arr\ has been
introduced by Karthik C.~S.\ and shown to be in
$\PLS$~\cite{Karthik2017}, then in $\CLS$~\cite{Gaertner2018}, and
finally in $\UEOPL$~\cite{Gaertner2018, UEOPL}. The latter complexity
class, established by Fearnley et al.~\cite{UEOPL}, has an intriguing
complete problem, but there is no evidence that \arr\ is complete for
$\UEOPL$.

Concerning complexity lower bounds, there is one result: \arr\ is
$\NL$-hard~\cite{Fearnley2018}. This is not a very strong statement
and means that every problem that can be solved by a nondeterministic
log-space Turing machine reduces (in log-space) to \arr.

Much more interesting are the natural one- and two-player variants of
\arr\ that have been introduced in the same
paper by Fearnley et al.~\cite{Fearnley2018} and later expanded by 
Ani et al.~\cite{Ani2020}. These variants allow a better comparison with the
previously mentioned graph games. It turns out that the one-player
variants of \arr\ are $\NP$-complete, and that the two-player variants are
$\PSPACE$-hard~\cite{Fearnley2018, Ani2020}. This shows that the $p$-player
variant of \arr\ is probably strictly harder than the $p$-player
variants of the other graph games mentioned before, for $p=1,2$. This makes it a bit less surprising that \arr\ itself ($p=0$) could so far not be shown to lie in $\Poly$.

On the algorithmic side, the benchmark is the obvious algorithm for
solving \arr\ on a graph with $n$ vertices: simulate the train
run. This is known to take at most $\bigO(n2^n)$ steps (after this, we can
conclude that the train runs forever)~\cite{Dohrau2017}. There is also
an $\Omega(2^n)$ lower bound for the simulation~\cite{Dohrau2017}. The
upper bound was improved to $\bigO(p(n) 2^{n/2})$ (in expectation) for
some polynomial $p$, using a way to efficiently sample from the
run~\cite{Gaertner2018}. The same bound was later achieved
deterministically~\cite{Hung,Rote}, and the approach can be
refined to yield a runtime of $\bigO(p(n) 2^{n/3})$, the currently
best one for general \arr\ instances~\cite{Rote}.

In this paper, we prove that \arr\ can be decided in subexponential time
$2^{\bigO(\sqrt{n}\log n)}$. While this is still far away from the desired
polynomial-time algorithm, the new upper bound is making the first significant
progress on the runtime. We also prove that polynomial runtime can be
achieved if the graph is close to acyclic, meaning that it can be made
acyclic by removing a constant number of vertices.

As the main technical tool from which we derive both results, we
introduce a generalization of \arr. In this \emph{multi-run} variant,
there is a subset $S$ of vertices where additional trains may start
and also terminate. It turns out that if we start the right numbers of
trains from the vertices in $S$, we also decide the original instance,
so the problem is reduced to searching for these right numbers. We
show that this search problem is well-behaved and can be solved by
systematic guessing, where the number of guesses is exponential
in $|S|$, not in $n$.

We are thus interested in cases where $S$ is small but at the same time
allows a sufficiently fast evaluation of a given guess. For the
subexponential algorithm, we choose $S$ as a set of size
$\bigO(\sqrt{n})$, with the property that a train can only take a
subexponential number of steps until it terminates (in $S$ or a
destination). For almost acyclic graphs, we choose $S$ as a minimum
feedback vertex set, a set whose removal makes the graph acyclic. In
this case, a train can visit any vertex only once before it
terminates.

The multi-run variant itself is an interesting new approach 
to the \arr\ problem, and other applications of it might be found in
the future.

\section{\arr} 
\label{sec:preliminaries}

The \arr\ problem was introduced by Dohrau et al.~\cite{Dohrau2017} as
the problem of deciding whether the train arrives at a given
destination or runs forever. Here, we work in a different but
equivalent setting (implicitly established by Dohrau et al.\ already)
in which the train always arrives at one of two destinations, and we
have to decide at which one. The definitions and results from Dohrau
et al.~\cite{Dohrau2017} easily adapt to our setting. We still provide
independent proofs, derived from the more general setting that we
introduce in Section~\ref{sec:framework}.

Given a finite set of vertices $V$, an origin $o\in V$, two
destinations $d, \, \dbar\notin V$ and two functions
$s_{even}, \, s_{odd} : V \rightarrow V \cup \dest$, the 6-tuple
$A = (V, \, o, \, d, \, \dbar, \, s_{even}, \, s_{odd})$ is an
\emph{\arr\ instance}. The vertices $s_{even}(v)$ and $s_{odd}(v)$ are
called the even and the odd successor of $v$.

An \arr\ instance $A$ defines a directed graph, connecting
each vertex $v\in V$ to its even and its odd successor. We call this the
\emph{switch graph} of $A$ and denote it by $G(A)$. To avoid special
treatment of the origin later, we introduce an artificial vertex
$Y\notin V\cup\{d, \, \dbar\}$ (think of it as the ``train yard'') that
only connects to the origin $o$. Formally, $G(A)=(V(A),E(A))$ where 
$V(A)=V\cup\{Y, \, d, \, \dbar\}$ and
$E(A) = \{(Y, o)\} \cup \{ (v, \, s_{even}(v)) : v \in V\} \cup \{ (v,
\, s_{odd}(v)) : v \in V\}$. We also refer to $E(A)$ simply as the
edges of $A$. An edge $e\neq(Y,o)$ is called \emph{proper}.

The \emph{run procedure} is the following. For every vertex we
maintain a current and a next successor, initially the even and the
odd one. We put a token (usually referred to as the train) at $o$ and
move it along switch graph edges until it reaches either $d$ or
$\dbar$. Whenever the train is at a vertex $v$, we move it to $v$'s
current successor and then swap the current and the next successor;
see Algorithm~\ref{algo:run_procedure} for a formal description and
Figure~\ref{fig:arrival} for an example.

\begin{algorithm}[htb]
\DontPrintSemicolon
\SetKwRepeat{Do}{do}{while}
\SetKwComment{cmt}{(*}{*)}
\SetNoFillComment 
\KwIn{\arr{} instance  $A=(V, \, o, \,
d, \, \dbar, \, s_{even}, \, s_{odd})$}
\KwOut{destination of the train: either $d$ or $\dbar$}

Let $s_{curr}$ and $s_{next}$ be arrays indexed by the vertices of $V$ \;
\For{$v \in V$}{
	$s_{curr}[v] \gets s_{even}(v)$ \;
	$s_{next}[v] \gets s_{odd}(v)$ \;
}

$v \gets o$ \tcc*[r]{traversal of edge $(Y,o)$}
\While{$v \neq d$ and $v \neq \dbar$}{
	$w \gets s_{curr}[v]$ \;
	swap($s_{curr}[v], \, s_{next}[v])$ \;
	$v \gets w$ \tcc*[r]{traversal of edge $(v,w)$}
}
\Return{$v$}
\caption{Run Procedure}
\label{algo:run_procedure}
\end{algorithm}

Algorithm~\ref{algo:run_procedure} (Run procedure) may cycle, but we
can avoid this by assuming that from every vertex $v\in V$, one of $d$
and $\dbar$ is reachable along a directed path in $G(A)$. We call such
an \arr\ instance \emph{terminating}, since it guarantees that either
$d$ or $\dbar$ is eventually reached.

\begin{figure}[htb]
  \begin{center}
    \includegraphics[width=0.3\textwidth]{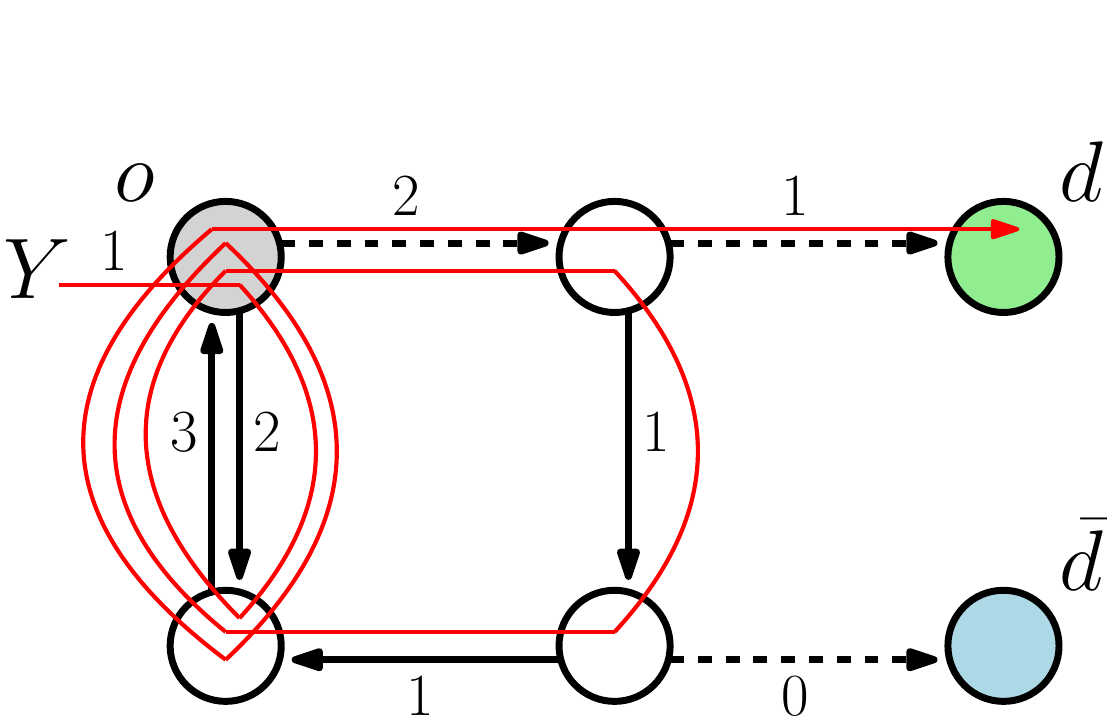}
  \end{center}
  \caption{A terminating \arr\ instance and the train run. Bold edges
    go to the even successors, dashed edges to the odd successors. The
    two successors may coincide (lower left vertex). The numbers
    indicate how often each edge is traversed by the train. \label{fig:arrival}}  
\end{figure}

\begin{lemma}\label{lem:visits}
  Let $A=(V, \, o, \,d, \, \dbar, \, s_{even}, \, s_{odd})$ be a
  terminating \arr{} instance, $|V|=n$. Let $v\in V$ and suppose
  that  the shortest path from $v$ to a destination in $G(A)$
  has length $m$. Then $v$ is visited (the train is at $v$) at most
  $2^m$ times by Algorithm~\ref{algo:run_procedure} (Run Procedure).
\end{lemma}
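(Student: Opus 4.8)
The plan is to prove this by induction on $m$, the length of a shortest path from $v$ to a destination. The key insight is that each time the train \emph{leaves} $v$, it alternates which successor it takes, so among any two consecutive departures from $v$, one goes to the even successor and one to the odd successor. Since at least one of the two successors, say $u$, lies on a shortest path from $v$ to a destination (hence has shortest-path distance $m-1$), roughly half of the departures from $v$ go to $u$. But more importantly, each visit to $v$ (except possibly the last) results in a departure, and each departure is immediately followed by a visit to whichever successor was taken.

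First I would set up the base case $m=0$: then $v$ itself is a destination, contradicting $v\in V$ — so really the base case is $m=1$, where $v$ has a destination as one of its successors. In that case, consider the departures from $v$: they alternate between $s_{even}(v)$ and $s_{odd}(v)$, and one of these is a destination. After the train departs $v$ to that destination, it never returns, so $v$ is departed at most twice (once to each successor, in some order), hence visited at most twice $= 2^1$. Actually I need to be slightly careful: the train could be visited, depart to the non-destination successor, come back, depart to the destination — that's two visits. Or it could depart to the destination on the first departure — one visit. Either way at most $2 = 2^m$.

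For the inductive step, suppose the claim holds for all vertices with shortest-path distance less than $m$, and let $v$ have distance $m \geq 2$. Let $u$ be a successor of $v$ with $\dist(u) = m-1$ (such $u$ exists by definition of shortest path). By the induction hypothesis, $u$ is visited at most $2^{m-1}$ times. Now I claim that the number of times the train departs $v$ toward $u$ is at most the number of times $u$ is visited — in fact each such departure causes an immediate visit to $u$, so departures-from-$v$-to-$u$ $\leq$ visits-to-$u$ $\leq 2^{m-1}$. Because departures from $v$ strictly alternate between its two successors (regardless of whether they coincide — if they coincide every departure goes to $u$ and the bound is even easier), the total number of departures from $v$ is at most twice the number of departures toward $u$, i.e.\ at most $2\cdot 2^{m-1} = 2^m$. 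Finally, the number of visits to $v$ equals the number of departures from $v$ plus at most one (the train might be sitting at $v$ at the very end — but $v\in V$ is not a destination, so in a terminating instance the run doesn't end at $v$; still, over a possibly-cycling run we just bound visits by departures $+\,1$, and since $v$ is not a destination every visit is followed by a departure, so visits $\le$ departures). Hence $v$ is visited at most $2^m$ times.

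The main obstacle I anticipate is handling the bookkeeping around the alternation argument cleanly: specifically, justifying that ``departures from $v$ toward $u$'' is at most ``visits to $u$'' requires observing that the step immediately after departing $v$ toward $u$ is a visit to $u$, and these visits to $u$ are distinct for distinct departures — routine, but must be stated. A second subtlety is the edge case where $s_{even}(v) = s_{odd}(v) = u$: then the alternation argument degenerates, but favourably, since \emph{every} departure from $v$ goes to $u$, giving departures-from-$v$ $\leq 2^{m-1} \leq 2^m$ directly. One should also double-check the off-by-one between visits and departures; the safe phrasing is that the number of visits to a non-destination vertex is at most one more than the number of departures, and since here we even get departures $\le 2^m$ with the visit-count being at most departures (every visit to a non-terminal vertex in a run that continues is followed by a departure), the bound $2^m$ holds. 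I would fold this into the induction by tracking the quantity ``number of times the train is at $v$'' directly and relating consecutive visits via the alternation.
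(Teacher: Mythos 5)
There is a genuine gap in the inductive step, and it is not the off-by-one you flagged (visits vs.\ departures), but a parity off-by-one in the alternation count. Your key inequality ``the total number of departures from $v$ is at most twice the number of departures toward $u$'' is false when $u$ is the \emph{odd} successor of $v$ and the total number of departures is odd: since the first departure goes to $s_{even}(v)$, a total of $D$ departures sends $\lceil D/2\rceil$ trains to the even successor and only $\lfloor D/2\rfloor$ to the odd one, so all you can assert is $D\le 2D_u+1$. With the induction hypothesis $D_u\le(\text{visits to }u)\le 2^{m-1}$ this only gives $D\le 2^m+1$, i.e.\ the bound $2^m$ claimed in the lemma is not reached. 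Moreover, this cannot be patched by a smarter choice of $u$ or by sharpening the counting: if the only successor at distance $m-1$ is the odd one, then $2^m+1$ visits to $v$ force only $\lfloor(2^m+1)/2\rfloor=2^{m-1}$ departures into $u$, which is perfectly consistent with the induction hypothesis, so a contradiction is unobtainable from visit counts alone. The statement with $2^m$ is nevertheless true, but ruling out the $(2^m+1)$-st visit needs a temporal ingredient that your induction does not carry.

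The paper's proof supplies exactly this ingredient by counting \emph{forward} along a fixed shortest path $v=v_m,v_{m-1},\ldots,v_0\in\dest$: among the first $2^m$ visits to $v$ (an even number, so the floor loses nothing), at least $2^{m-1}$ of the immediately following departures go to $v_{m-1}$, hence $v_{m-1}$ is visited at least $2^{m-1}$ times within that window; cascading down, $v_0\in\dest$ is visited, and at that moment the run stops, so $v$ cannot be visited again. In other words, what is needed is not only ``$u$ is visited at most $2^{m-1}$ times'' but ``by the time $v$ has been visited $2^m$ times, the run has already been driven into a destination,'' which your backward induction on the quantity ``number of visits'' does not provide. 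To repair your argument you would have to strengthen the induction hypothesis to a statement of this temporal kind (e.g.\ that after the $2^{m-1}$-st visit to a distance-$(m-1)$ vertex the run terminates before returning there, together with the interleaving of departures from $v$), at which point you have essentially reconstructed the paper's cascade; your base case and the bookkeeping ``departures from $v$ toward $u$ $\le$ visits to $u$'' are fine as stated.
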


\begin{proof} Let $v=v_m,v_{m-1},\ldots,v_0\in\dest$ be the sequence
  of vertices on a shortest path from $v$ to
  $\dest$. Consider the first $2^m$ visits to $v$ (if there are
  less, we are done). Once every two consecutive visits, the train
  moves on to $v_{m-1}$, so we can consider the first $2^{m-1}$ visits
  to $v_{m-1}$ and repeat the argument from there to show that $v_i$
  is visited at least $2^i$ times for all $i$, before $v$ exceeds
  $2^m$ visits. In particular, $v_0\in\dest$ is visited, so the run
  indeed terminates within at most $2^m$ visits to $v$.
\end{proof}

\begin{lemma}\label{lem:termination}
  Let $A=(V, \, o, \,d, \, \dbar, \, s_{even}, \, s_{odd})$ be a terminating \arr{} instance, $|V|=n$. 
  Let $\ell$ be the maximum length of the shortest path from a vertex in $V$ to a destination.
  Algorithm~\ref{algo:run_procedure} (Run Procedure) traverses at most $(n - \ell + 2) 2^\ell-2$
  proper edges.
\end{lemma}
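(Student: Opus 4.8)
The plan is to reduce the count of proper edge traversals to a count of vertex visits and then invoke Lemma~\ref{lem:visits}. The first step is the observation that Algorithm~\ref{algo:run_procedure} traverses exactly one non-proper edge, namely $(Y,o)$ at the very start, and that from then on the train sits at a vertex of $V$ at the top of every loop iteration (it can never return to $Y$, which has no incoming edge, and it halts upon reaching $d$ or $\dbar$). Consequently, the number of proper edges traversed equals $\sum_{v\in V} c(v)$, where $c(v)$ denotes the number of times the train is at $v$. By Lemma~\ref{lem:visits}, if $m_v$ is the length of a shortest path from $v$ to a destination in $G(A)$, then $c(v)\le 2^{m_v}$; since the instance is terminating, $m_v$ is finite, and $1\le m_v\le\ell$ because $v$ itself is not a destination. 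So the task becomes: upper bound $\sum_{v\in V} 2^{m_v}$ given that each $m_v$ lies in $\{1,\dots,\ell\}$.

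The key point, which I would establish next, is that the distances $m_v$ are forced to realize every value between $1$ and $\ell$. Indeed, pick a vertex $u$ with $m_u=\ell$ and a shortest path $u=u_\ell,u_{\ell-1},\dots,u_0\in\dest$; each suffix $u_i,\dots,u_0$ is again a shortest path from $u_i$ to a destination (otherwise a shortcut at $u_i$ would shorten the path from $u$), so $m_{u_i}=i$ for $i=1,\dots,\ell$, and $u_1,\dots,u_\ell$ are $\ell$ distinct vertices of $V$. These contribute $\sum_{i=1}^{\ell}2^i=2^{\ell+1}-2$ to the sum, while each of the at most $n-\ell$ remaining vertices contributes at most $2^\ell$; hence
\[
  \sum_{v\in V}2^{m_v}\ \le\ (2^{\ell+1}-2)+(n-\ell)2^\ell\ =\ (n-\ell+2)\,2^\ell-2,
\]
which is the claimed bound.

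I expect no serious obstacle here: the computation is elementary, and the only genuine idea is the distance-realization observation, which is exactly what replaces the crude estimate $n\,2^\ell$ by the sharper additive form $(n-\ell+2)2^\ell-2$. The rest is bookkeeping — checking that precisely one edge traversal (the first) is non-proper, so the edge count matches the visit count with no off-by-one error, and noting that $n\ge\ell$ so that the ``$n-\ell$ remaining vertices'' count is non-negative.
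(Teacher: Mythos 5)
Your proof is correct and follows essentially the same route as the paper: reduce proper-edge traversals to vertex visits, apply Lemma~\ref{lem:visits}, and use the fact that every distance $1,\dots,\ell$ is realized to sharpen the sum to $(n-\ell+2)2^\ell-2$. The only difference is cosmetic — you explicitly justify the distance-realization step via shortest-path suffixes, which the paper leaves implicit in its claim that $n_i>0$ for all $i\le\ell$.
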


\begin{proof}
  By Lemma~\ref{lem:visits}, the total number of visits to vertices
  $v\in V$ is bounded by $\sum_{i=1}^n n_i 2^i$, where $n_i$ is the
  number of vertices with a shortest path of length $i$ to a destination. 
  We have $n_i>0$ if and only if $i\leq\ell$, and hence the sum is maximized 
  if $n_i=1$ for all $i<\ell$, and $n_{\ell}=n-\ell+1$. In this case, the sum is
  $(n - \ell + 2) 2^{\ell}-2$. The number of proper edges being traversed (one after every
  visit of $v\in V$) is the same.
\end{proof}

Given a terminating instance, \arr{} is the problem of deciding
whether Algorithm~\ref{algo:run_procedure} (Run Procedure) returns $d$
(YES instance) or $\dbar$ (NO instance). It is unknown whether
\arr\ $\in \Poly$, but it is in $\NP\cap\CoNP$, due to the existence of
\emph{switching flows} that are certificates for the
output of Algorithm~\ref{algo:run_procedure} (Run Procedure).

\subsection{Switching Flows}

For a vertex $v$ and a set of edges $E$,
we will denote the set of outgoing edges of $v$ by
$E^+(v)$. Analogously, we will denote the set of incoming edges of $v$
by $E^-(v)$. Furthermore, for a function $x : E \rightarrow \N_0$, we
will also use the notation $x_e$ instead of $x(e)$ to denote the
value of $x$ at some edge $e \in E$. Lastly, given some vertex $v$,
edges $E$ and a function $x : E \rightarrow \N_0$, we will use $x^+(v)
:= \sum_{e \in E^+(v)} x_e$ to denote the \emph{outflow} of $x$
at $v$ and $x^-(v) :=\sum_{e \in E^-(v)} x_e$ to denote the
\emph{inflow} of $x$ at $v$. For two functions $x,x': E \rightarrow
\N_0$, we write $x\leq x'$ if this holds componentwise, i.e.\
$x_e\leq x'_e$ for all $e\in E$.

\begin{definition}[Switching Flow~\cite{Dohrau2017}]\label{def:switching_flow}
	Let $A=(V, \, o, \,d, \, \dbar, \, s_{even}, \, s_{odd})$ be a
        terminating \arr{} instance with edges $E$. A function $x
        : E \rightarrow \N_0$ is a \emph{switching flow} for $A$ if
\[
\begin{array}{rccll}
x^+(Y) &=& 1, \\
  x^+(v) - x^-(v) &=& 0, & \quad v\in V & \text{(flow conservation)}\\
  x_{(v, s_{even}(v))} -x_{(v, s_{odd}(v))}&\in&\{0,1\}, & \quad v\in V& \text{(switching behavior)}.
\end{array}
\]
Moreover, $x$ is called a switching flow \emph{to} $t\in\dest$ if $x^-(t)=1$.
\end{definition}

Note that due to flow conservation, a switching flow is a switching 
flow either to $d$ or to $\dbar$: exactly one of the destinations
must absorb the unit of flow emitted by $Y$. If we set $x_e$ to the
number of times the edge $e$ is traversed in
Algorithm~\ref{algo:run_procedure} (Run Procedure), we obtain a
switching flow to the output; see Figure~\ref{fig:arrival} for an
example. Indeed, every time the train enters
$v\in V$, it also leaves it; this yields flow conservation. The strict
alternation between the successors (beginning with the even one)
yields switching behavior.

Hence, the existence of a switching flow to the output is necessary
for obtaining the output. Interestingly, it is also sufficient. For
that, it remains to prove that we cannot have switching flows to both
$d$ and $\dbar$ for the same instance.

\begin{theorem}[Switching flows are certificates~\cite{Dohrau2017}]
\label{theorem:dohrau}
Let $A=(V, \, o, \, d, \, \dbar, \, s_{even}, \, s_{odd})$ be a
terminating \arr{} instance, $t\in\dest$. Algorithm~\ref{algo:run_procedure} 
(Run Procedure) outputs $t$ if and only if there exists a switching flow to
$t$. 
\end{theorem}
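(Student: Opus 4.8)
The ``only if'' direction is essentially already in the text: as observed right after Definition~\ref{def:switching_flow}, recording in $z_e$ the number of traversals of $e$ by Algorithm~\ref{algo:run_procedure} (Run Procedure) yields a switching flow, and since $A$ is terminating the run halts at some output $t'\in\dest$ with $z^-(t')=1$, so $z$ is a switching flow to $t'$. Hence my plan is to prove the ``if'' direction, and the engine for it will be a \emph{minimality} statement for the run flow: $z\le x$ componentwise for every switching flow $x$ of $A$. Granting this, I would finish as follows. As already noted, flow conservation forces $x^-(d)+x^-(\dbar)=1$, so $x$ is a switching flow to exactly one destination; and $z\le x$ together with $z^-(t')=1$ gives $x^-(t')\ge 1$, hence $x^-(t')=1$. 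Thus \emph{every} switching flow is a switching flow to the output $t'$; in particular, if a switching flow to $t$ exists then $t=t'$. Combined with ``only if'', this is the theorem.

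To prove $z\le x$ I would induct over the prefixes of the run. Write $Y=v_0,\,o=v_1,\,v_2,\dots$ for the sequence of vertices visited, and let $z^{(k)}:E\to\N_0$ count the edge traversals in the first $k$ steps; then $z^{(0)}=0$, step $k$ traverses $e_k=(v_k,v_{k+1})$ and increments only that coordinate, and $z^{(k)}=z$ once the run halts. Since only coordinate $e_k$ changes at step $k$, it suffices to show $z^{(k)}_{e_k}<x_{e_k}$ whenever $z^{(k)}\le x$. Step $0$ is trivial, as $z^{(0)}_{(Y,o)}=0<1=x^+(Y)=x_{(Y,o)}$. For $k\ge 1$, set $v:=v_k\in V$ (not a destination, since the run continues) and $e:=e_k=(v,w)$, and combine: (i) the train currently sits at $v$, so $z^{(k),-}(v)=z^{(k),+}(v)+1$; (ii) the induction hypothesis and flow conservation of $x$, giving $z^{(k),+}(v)+1=z^{(k),-}(v)\le x^-(v)=x^+(v)$; (iii) the run leaves $v$ alternately through $s_{even}(v)$ then $s_{odd}(v)$, so with $j:=z^{(k),+}(v)$ the departure at step $k$ is through $s_{even}(v)$ if $j$ is even and through $s_{odd}(v)$ if $j$ is odd, and the $j$ earlier departures split as evenly as possible; (iv) the switching behavior of $x$ at $v$. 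Assuming $s_{even}(v)\neq s_{odd}(v)$ and writing $p=s_{even}(v)$, $q=s_{odd}(v)$, $a=z^{(k)}_{(v,q)}$: if $j$ is even then $e=(v,p)$, $z^{(k)}_{(v,p)}=z^{(k)}_{(v,q)}=a$, (ii) reads $2a+1\le x_{(v,p)}+x_{(v,q)}$, and $x_{(v,q)}\le x_{(v,p)}$ from (iv) yields $x_{(v,p)}\ge a+1>a=z^{(k)}_e$; if $j$ is odd then $e=(v,q)$, $z^{(k)}_{(v,p)}=a+1$, $z^{(k)}_{(v,q)}=a$, (ii) reads $2a+2\le x_{(v,p)}+x_{(v,q)}$, and $x_{(v,p)}\le x_{(v,q)}+1$ from (iv) yields $x_{(v,q)}\ge a+1>a=z^{(k)}_e$. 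In the degenerate case $s_{even}(v)=s_{odd}(v)=u$, the single edge $(v,u)$ carries every departure, so $z^{(k)}_e=z^{(k),+}(v)<z^{(k),-}(v)\le x^+(v)=x_{(v,u)}=x_e$ directly from (i)--(ii). This closes the induction and hence proves minimality.

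The only real obstacle is this inductive step: one must turn the ``one extra unit for the train currently at $v$'' hidden in $z^{(k),-}(v)=z^{(k),+}(v)+1$, together with the $\{0,1\}$-valued switching constraint on $x$, into a \emph{strict} inequality on the single coordinate being incremented. The mild parity bookkeeping above and the use of the switching behavior of $x$ \emph{in the direction dictated by that parity} (comparing $x_{(v,s_{even}(v))}$ with $x_{(v,s_{odd}(v))}$ the right way) are exactly what make this go through; everything else is routine.
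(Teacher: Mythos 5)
Your proposal is correct and follows essentially the paper's route: the ``if'' direction is reduced to the minimality of the run profile ($\hat{x}\le x$ for every switching flow $x$, Theorem~\ref{thm:run_profile}), from which uniqueness of the destination follows exactly as you argue. Your induction over run prefixes with explicit parity bookkeeping is just a single-train rendering of the paper's pebble argument (which proves the same minimality as the $S=\emptyset$ case of Theorem~\ref{thm:unique_x}, maintaining the pebble invariants instead of recomputing the split of past departures), so the difference is only presentational.
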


The switching flow corresponding to the actual train run can be
characterized as follows.

\begin{theorem}[The run profile is the minimal switching
  flow~\cite{Dohrau2017}]\label{thm:run_profile}
  Let $A=(V, \, o, \, d, \, \dbar, \, s_{even}, \, s_{odd})$ be a
  terminating \arr{} instance with edges $E$. Let $\hat{x}$ be the
  \emph{run profile} of $A$, meaning that $\hat{x}_e$ counts the number of
  times edge $e$ is traversed during
  Algorithm~\ref{algo:run_procedure} (Run Procedure).
  Then $\hat{x}\leq x$ for all switching flows $x$. In particular,
  $\hat{x}$ is the unique minimizer of the total flow
  $\sum_{e\in E} x_e$ over all switching flows.
\end{theorem}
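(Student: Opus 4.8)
The plan is to prove the componentwise inequality $\hat x\le x$ for an arbitrary fixed switching flow $x$ by following the run procedure one edge traversal at a time. For $k\ge 0$ let $x^{(k)}$ be the \emph{partial run profile} after $k$ traversals, so that $x^{(k)}_e$ counts how often $e$ is used among the first $k$ traversals, and write $x^{(k),+}(v)$, $x^{(k),-}(v)$ for its outflow and inflow at $v$. I would prove $x^{(k)}\le x$ by induction on $k$; since $A$ is terminating, the run is finite, so $\hat x = x^{(k)}$ for $k$ equal to the total number of traversals and the claim follows. The base case is $k=1$: the first traversal is of $(Y,o)$, and $x^+(Y)=1$ together with $E^+(Y)=\{(Y,o)\}$ forces $x_{(Y,o)}=1=x^{(1)}_{(Y,o)}$, while all other components of $x^{(1)}$ are $0\le x_e$.

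For the inductive step I would assume $x^{(k)}\le x$ and that after $k$ traversals the train sits at a vertex $v$; if $v\in\dest$ there is nothing to prove, so take $v\in V$ and let $e=(v,w)$ be the edge about to be traversed, with $w$ the current successor of $v$. The first ingredient is the elementary invariant of the run that at every moment the current vertex has been entered exactly one more time than it has been left, i.e.\ $x^{(k),-}(v)=x^{(k),+}(v)+1$ (this bookkeeping is precisely what the artificial vertex $Y$ is for, which is why $(Y,o)$ is handled as a separate base case). Combining this with the induction hypothesis $x^{(k),-}(v)\le x^-(v)$ and with flow conservation $x^-(v)=x^+(v)$ (valid since $v\in V$) gives
\[
x^{(k),+}(v)\ =\ x^{(k),-}(v)-1\ \le\ x^-(v)-1\ =\ x^+(v)-1,
\]
so $x^{(k)}$ has strictly smaller total outflow at $v$ than $x$ does.

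It remains to push this deficit onto the specific edge $e$. If $s_{even}(v)=s_{odd}(v)$, then $v$ has the single outgoing edge $e$ and the displayed inequality already says $x^{(k)}_e<x_e$. Otherwise set $e_1=(v,s_{even}(v))$, $e_2=(v,s_{odd}(v))$, so $E^+(v)=\{e_1,e_2\}$ and $x^{(k)}_{e_1}+x^{(k)}_{e_2}\le x_{e_1}+x_{e_2}-1$, and I would split on $w$. If $w=s_{even}(v)$, the run has so far used $e_1,e_2$ equally often, so $x^{(k)}_{e_1}=x^{(k)}_{e_2}$; assuming towards a contradiction $x^{(k)}_{e_1}=x_{e_1}$ yields $x_{e_1}=x^{(k)}_{e_2}\le x_{e_2}$, and the switching behavior $x_{e_1}-x_{e_2}\in\{0,1\}$ then forces $x_{e_1}=x_{e_2}$, whence $x^{(k)}_{e_1}+x^{(k)}_{e_2}=x_{e_1}+x_{e_2}$, contradicting the displayed inequality. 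If $w=s_{odd}(v)$, the run has used $e_1$ exactly once more than $e_2$, so $x^{(k)}_{e_1}=x^{(k)}_{e_2}+1$; assuming $x^{(k)}_{e_2}=x_{e_2}$ gives $x^{(k)}_{e_1}=x_{e_2}+1\le x_{e_1}$, so $x_{e_1}-x_{e_2}\ge 1$, forcing $x_{e_1}-x_{e_2}=1$ and again $x^{(k)}_{e_1}+x^{(k)}_{e_2}=x_{e_1}+x_{e_2}$, a contradiction. In both cases $x^{(k)}_e<x_e$, hence $x^{(k+1)}_e=x^{(k)}_e+1\le x_e$ and all other components are unchanged, which closes the induction.

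Finally, for the ``in particular'' statement I would recall (as observed just before the theorem) that $\hat x$ is itself a switching flow: flow conservation holds because the train leaves every vertex it enters, and switching behavior holds because at each vertex the successors strictly alternate beginning with the even one. Since $\hat x\le x$ for every switching flow $x$ and $\hat x$ is one of them, any switching flow $x\neq\hat x$ satisfies $\hat x\le x$ with at least one strict component, so $\sum_{e\in E}\hat x_e<\sum_{e\in E}x_e$; thus $\hat x$ is the unique minimizer of the total flow. I expect the inductive step to be the only real obstacle, and within it the delicate point is matching the combinatorial state of the run --- which of the two successors comes next, equivalently the parity of the number of departures from $v$ --- against the switching constraint on the \emph{abstract} flow $x$; once the ``entered one more time than left'' invariant is in hand the rest is the two-case comparison above, the only other thing needing care being the bookkeeping around $Y$ and $o$.
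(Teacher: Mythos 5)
Your argument is correct and is essentially the paper's own: the paper proves this statement as the $S=\emptyset$ special case of Theorem~\ref{thm:unique_x} via the pebble argument, whose invariants (a) and (b) on the residual $p = x - x^{(k)}$ are exactly your ``entered once more than left'' identity combined with flow conservation of $x$, together with your parity comparison of $x^{(k)}_{e_1},x^{(k)}_{e_2}$ against the switching constraint of $x$. The only differences are cosmetic: you induct on the partial run profile and re-derive the slack on the edge about to be traversed at each step by contradiction, whereas the paper maintains the residual-pebble invariants directly and does so in the multi-train generality of the Multi-Run Procedure.
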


We note that this provides the missing direction of
Theorem~\ref{theorem:dohrau}. Indeed, $\hat{x}$ is a switching flow
and hence either has $\hat{x}^-(d)=1$ or $\hat{x}^-(\dbar)=1$. By
$\hat{x}\leq x$, every switching flow $x$ is to the same
destination. In general, there can be switching flows $x\neq \hat{x}$~\cite{Dohrau2017}.

We will derive Theorem~\ref{thm:run_profile} as a special
case of Theorem~\ref{thm:unique_x} in the next section.

\section{A GENERAL FRAMEWORK}\label{sec:framework}
In order to solve the \arr\ problem, we can simulate
Algorithm~\ref{algo:run_procedure} (Run Procedure) which takes
exponential time in the worst case~\cite{Dohrau2017}; alternatively,
we can try to get hold of a switching flow; via
Theorem~\ref{theorem:dohrau}, this also allows us to decide \arr.

According to Definition~\ref{def:switching_flow}, a switching flow can
be obtained by finding a feasible solution to an integer linear
program (ILP); this is a hard task in general, and it is unknown
whether switching flow ILPs can be solved more efficiently than
general ILPs.

In this section, we develop a framework that allows us to
reduce the problem to that of solving a number of more constrained ILPs. 
At the same time, we provide direct methods for solving them that do 
not rely on using general purpose ILP solvers.

\subsection{The idea}
Given a terminating \arr\ instance, we
consider the switching flow conditions in
Definition~\ref{def:switching_flow}. Given an arbitrary fixed subset $S=\{v_1,\ldots,v_k\}\subseteq V$ of $k$ vertices, we drop the flow
conservation constraints at the vertices in $S$, but at the same time
prescribe outflow values $x^+(v_1),\ldots, x^+(v_k)$ that we can think
of as guesses for their values in a switching flow. 

If we minimize the total flow subject to these guesses, we 
obtain a unique solution (Theorem~\ref{thm:unique_x}~(i) below) and hence
unique inflow values $x^-(v_1),\ldots, x^-(v_k)$ for the vertices in
$S$. If we happen to stumble upon a fixed point of the mapping
$x^+(v_1),\ldots, x^+(v_k)\rightarrow x^-(v_1),\ldots, x^-(v_k)$, we
recover flow conservation also at $S$, which means that our guesses were
correct and we have obtained a switching flow.

The crucial property is that the previously described mapping is
\emph{monotone} (Theorem~\ref{thm:unique_x}~(ii) below), meaning that the theory of
\emph{Tarski fixed points} applies that guarantees the existence of a
fixed point as well as efficient algorithms for finding it (Lemma~\ref{lem:tarski} below).

Hence, we reduce the computation of a switching flow to a benign search
problem (for a Tarski fixed point), where every search step requires
us to solve a ``guessing'' ILP. We next present a ``rail'' way of solving the
guessing ILP that turns out to be more efficient in the worst case (and also simpler)
than general purpose ILP solvers. For suitable switch graphs and
appropriate choices of the set $S$, it will be fast enough to yield
the desired runtime results.  

\subsection{The Multi-Run Procedure}\label{sec:multirun}
Given $S=\{v_1,\ldots,v_k\}\subseteq V$ and $w\in\N_0^k$ (guesses for
the outflows from the vertices in $S$), we start one train from $Y$
and $w_i$ trains from $v_i$ until they arrive back in $S$, or at a
destination. In this way, we produce inflow values for the vertices in
$S$.

By starting, we mean that we move each of the trains by one step: the
one on $Y$ moves to $o$, while $\ceil{w_i/2}$ of the ones at $v_i$
move to the even successor of $v_i$, and $\floor{w_i/2}$ to the odd
successor. Trains that are now on vertices in $V\setminus S$ are called
\emph{waiting} (to move on).

For all $v\in V\setminus S$, we initialize current and next
successors as before in Algorithm~\ref{algo:run_procedure} (Run Procedure). Then we
(nondeterministically) repeat the following until there are no more trains waiting.

We pick a vertex $v\in V\setminus S$ where some trains are waiting
and call the number of waiting trains $t(v)$. We choose a number
$\tau\in\{1,\ldots,t(v)\}$ of trains to move on; we move
$\ceil{\tau/2}$ of them to the current successor and $\floor{\tau/2}$
to the next successor. If $\tau$ is odd, we afterwards swap the
current and the next successor at $v$.

Algorithm~\ref{algo:multi_run_procedure} (Multi-Run Procedure)
provides the details. For $S=\emptyset$, the procedure becomes
deterministic and is equivalent to Algorithm~\ref{algo:run_procedure}
(Run Procedure).

\begin{algorithm}[htb]
\DontPrintSemicolon
\SetKwRepeat{Do}{do}{while}
\KwIn{Terminating \arr{} instance  $A=(V, \, o, \,d, \, \dbar, \,
  s_{even}, \, s_{odd})$ with edges $E$; \\
$S =\{v_1, \, v_2, \, \dots, \, v_k \}\subseteq V$, $w = (w_1,
\, w_2, \, \dots, \, w_k) \in \N_0^k$ (one train starts from $Y$, and $w_i$ trains start from $v_i$).}
\KwOut{number of trains arriving at $d, \dbar$, and in $S$, respectively}

Let $t$ be a zero-initialized array indexed by the vertices of $V\cup\dest$ \;
$t[o] \gets 1$ \tcc*[r]{traversal of $(Y,o)$}
\For{$i=1,2,\ldots,k$}{
  $t[s_{even}(v_i)] \gets t[s_{even}(v_i)] + \ceil{w_i/2}$
   \tcc*[r]{$\ceil{w_i/2}$ traversals of $(v_i, s_{even}(v_i))$} 
   $t[s_{odd}(v_i)] \gets t[s_{odd}(v_i)] + \floor{w_i/2}$
   \tcc*[r]{$\floor{w_i/2}$ traversals of $(v_i, s_{odd}(v_i))$} 
}
Let $s_{curr}$ and $s_{next}$ be arrays indexed by the vertices of
$V\setminus S$ \;
\For{$v \in V\setminus S$}{
	$s_{curr}[v] \gets s_{even}(v)$ \;
	$s_{next}[v] \gets s_{odd}(v)$ \;
}

\While{$\exists v\in V\setminus S: t[v]>0$}{
  pick $v\in V\setminus S$ such that $t[v]>0$ and choose $\tau\in\{1,\ldots,t[v]\}$\;
  $t[v] \gets t[v] - \tau$ \;
  $t[s_{curr}(v)] \gets t[s_{curr}(v)] + \ceil{\tau/2}$
      \tcc*[r]{$\ceil{\tau/2}$ traversals of $(v, s_{curr}(v))$} 
   $t[s_{next}(v)] \gets t[s_{next}(v)] + \floor{\tau/2}$
       \tcc*[r]{$\floor{\tau/2}$ traversals of $(v, s_{next}(v))$}
   \If{$\tau$ is odd}{
      swap($s_{curr}[v], \, s_{next}[v])$\;
    }
  }
  \Return{$(t[d],t[\dbar],t[v_1], t[v_2],\ldots,t[v_k])$}
\caption{Multi-Run Procedure}
\label{algo:multi_run_procedure}
\end{algorithm}

\begin{lemma}\label{m_termination}
  Algorithm~\ref{algo:multi_run_procedure} (Multi-Run Procedure)
  terminates.
\end{lemma}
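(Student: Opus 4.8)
The plan is to argue termination by exhibiting a quantity that strictly decreases with every iteration of the \texttt{while} loop and is bounded below. The natural candidate is a weighted count of the waiting trains, where the weight assigned to a train sitting at a vertex $v \in V \setminus S$ reflects how far $v$ is from ``escaping'' the set $V \setminus S$. Concretely, since the instance is terminating, every vertex has a directed path to a destination in $G(A)$; using these we can assign to each $v \in V \setminus S$ a finite integer potential $p(v)$ — for instance, $2^{m(v)}$ where $m(v)$ is the length of a shortest path from $v$ to $\dest \cup S$ in $G(A)$ (a path that stops as soon as it first hits $S$ or a destination, so that it stays inside $V \setminus S$ until the last step). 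I would then define $\Phi := \sum_{v \in V \setminus S} t[v] \cdot p(v)$ and track how $\Phi$ changes when we process a chosen vertex $v$ with $\tau$ trains.

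First I would set up the potentials $p(v)$ carefully: take $m(v)$ to be the minimum length over all directed paths in $G(A)$ from $v$ whose internal vertices all lie in $V \setminus S$ and whose endpoint lies in $S \cup \dest$; terminating-ness guarantees this is well-defined and finite for every $v \in V \setminus S$ (a shortest path to a destination works, truncated at the first time it re-enters $S$ if ever). The key inequality to establish is that for $v \in V \setminus S$ with current/next successors $c = s_{curr}[v]$ and $c' = s_{next}[v]$, at least one of $c, c'$ — or more precisely, we need a bound showing the potential mass moved out of $v$ lands on vertices of strictly smaller potential often enough. Here I would mimic the counting argument of Lemma~\ref{lem:visits}: one of the two successors of $v$, say along the shortest escaping path, has potential at most $p(v)/2$; whichever of $c, c'$ that is, it receives either $\ceil{\tau/2}$ or $\floor{\tau/2}$ of the trains. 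The total outgoing potential contribution is at most $\ceil{\tau/2} \cdot p(c) + \floor{\tau/2} \cdot p(c')$, and I need this to be strictly less than $\tau \cdot p(v)$, the potential removed from $v$. Since $\min(p(c), p(c')) \le p(v)/2$ and $\max(p(c), p(c')) < p(v)$ is \emph{not} guaranteed in general, I would instead argue more crudely: both $p(c), p(c') \le$ (some global bound) won't suffice, so the honest route is to show $\ceil{\tau/2} p(c) + \floor{\tau/2} p(c') \le \tau \cdot p(v) - 1$ using that $\ceil{\tau/2} + \floor{\tau/2} = \tau$ together with $\ceil{\tau/2}, \floor{\tau/2} \le \tau$ and $\min(p(c),p(c')) \le \lfloor p(v)/2 \rfloor$; combined with a slight strengthening of the potential (e.g. $p(v) = 2^{m(v)+1}$ or an additive lexicographic refinement) this gap can be made to hold. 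When $\tau$ is even, $\ceil{\tau/2}=\floor{\tau/2}=\tau/2$, so the outgoing potential is $\frac{\tau}{2}(p(c)+p(c')) \le \frac{\tau}{2}(p(v)/2 + p(v)) = \frac{3\tau}{4}p(v) < \tau p(v)$, giving a clean strict decrease; when $\tau$ is odd a one-off term of size $p(v)$ needs to be absorbed, which is where the refinement of $p$ or a lexicographic argument comes in.

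Once the strict-decrease inequality is in hand, termination follows immediately: $\Phi$ is a nonnegative integer (each $p(v)$ and $t[v]$ is a nonnegative integer), it strictly decreases each iteration of the \texttt{while} loop, and so the loop runs finitely many times; the finitely many initialization steps before the loop add nothing to worry about.

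The main obstacle I anticipate is the odd-$\tau$ case in the potential inequality. Moving an odd number of trains splits them unevenly ($\ceil{\tau/2}$ vs $\floor{\tau/2}$) and also swaps the successors, so a naive power-of-two potential can fail to decrease in a single step — for example if $p(c') $ happens to be as large as $p(v)$ and $\ceil{\tau/2}$ of the trains go there. The cleanest fix is probably a lexicographic potential: primary coordinate the sum $\sum_v t[v] 2^{m(v)}$, and if that ties (which it can when $\tau$ is odd and one successor matches $p(v)$), a secondary tie-breaker such as the plain train count $\sum_{v} t[v]$ or a count weighted by the swap-parity state. Alternatively, one can borrow the amortized argument behind Lemma~\ref{lem:visits} directly: pair up consecutive moves out of the same vertex so that every \emph{two} moves push a train strictly closer to $\dest \cup S$, which bounds the total number of moves out of each vertex and hence the total number of \texttt{while} iterations. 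I would present whichever of these two is shorter once the details are worked out.
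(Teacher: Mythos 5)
Your primary argument has a concrete flaw. The potential $\Phi=\sum_{v\in V\setminus S} t[v]\,2^{m(v)}$ need not decrease per iteration --- it can strictly \emph{increase} --- because nothing bounds the potential of the other successor of the picked vertex. Only one of the two successors of $v$ is guaranteed to satisfy $m(\cdot)\le m(v)-1$ (the one on a shortest escaping path); the second successor may lie much farther from $\dest\cup S$ than $v$ itself, since switch-graph edges can lead away from the destinations. Your even-$\tau$ computation silently assumes $\max\bigl(p(c),p(c')\bigr)\le p(v)$, which is unjustified: with $\tau=2$, $m(v)=1$, $m(c)=0$, $m(c')=10$, the moved trains carry potential $1+2^{10}$ while only $2\cdot 2$ is removed from $v$. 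Because the \emph{primary} coordinate can strictly increase, no lexicographic tie-breaker, no replacement of $2^{m(v)}$ by $2^{m(v)+1}$, and no parity bookkeeping repairs this; trains genuinely can drift away from $\dest\cup S$ for many steps, so a single-step monovariant of this distance-weighted form does not exist.

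Your second, undeveloped fallback (amortize as in Lemma~\ref{lem:visits}: of every two departures from a vertex, one goes to the closer successor) is the right route and is essentially the paper's proof, but it still needs an ingredient you do not mention: an arrival at a vertex is not immediately a departure, so to propagate ``many departures from $v$'' into ``many arrivals at $\dest\cup S$'' along a shortest path you must use that only finitely many trains exist, hence the number waiting at any vertex --- the difference $x^-(v)-x^+(v)$ --- is bounded at all times. The paper argues by contradiction: if the loop ran forever, some $v\in V\setminus S$ would have unbounded inflow, hence (since $t(v)$ is bounded) unbounded outflow, hence by the even split both successors have unbounded inflow; propagating along reachability, every vertex reachable from $v$, in particular $d$ or $\dbar$ (the instance is terminating), would receive unboundedly many trains, contradicting the fixed total number of trains. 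You should commit to and carry out this argument (its quantitative form is a separate, later lemma in the paper) rather than the potential-function plan, which as stated cannot be salvaged.
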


\begin{proof}
  This is a qualitative version of the argument in
  Lemma~\ref{lem:visits}.  Let $x:E\rightarrow\N_0$ record how many
  times each edge $e\in E$ has been traversed in total, at any given
  time of Algorithm~\ref{algo:multi_run_procedure} (Multi-Run
  Procedure). For $v\in V\setminus S$, we always have
  $x^+(v)=x^-(v)-t(v)$, where $t(v)$ is the number of trains currently
  waiting at $v$. Suppose for a contradiction that the Multi-Run
  procedure cycles. Then $x^-(v)$ is unbounded for at least one
  $v\in V\setminus S$, which means that $x^+(v)$ is also unbounded,
  since $t(v)$ is bounded. This in turn means that $x^-(s_{even}(v))$
  and $x^-(s_{odd}(v))$ are unbounded as well, since we distribute
  $x^+(v)$ evenly between the two successors. Repeating this argument,
  we see that $x^-(w)$ is unbounded for all vertices $w$ reachable from
  $v$. But as $x^-(d)$ and $x^-(\dbar)$ are bounded (by the number of
  trains that we started), neither $d$ nor $\dbar$ are reachable from
  $v$. This is a contradiction to $A$ being terminating.
\end{proof}

\subsection{Candidate switching flows}
After Algorithm~\ref{algo:multi_run_procedure} (Multi-Run Procedure)
has terminated, let $\hat{x}_e$ be the number of times the edge $e$ was
traversed.  We then have flow conservation at $v\in V\setminus S$,
switching behavior at $v\in V$ and outflow $w_i$ from $v_i$. Indeed,
every train that enters $v\in V\setminus S$ eventually also leaves it;
moreover, the procedure is designed such that it simulates moving
trains out of $v\in V$ individually, strictly alternating between
successors. Finally, as we start $w_i$ trains from $v_i \in S$ and stop all
trains once they arrive in $S$, we also have outflow $w_i$ from $v_i$.

We remark that we do not have any control over how many trains end up
at $d$ or $\dbar$. Also, $\hat{x}$ could in principle depend on the order in
which we pick vertices, and on the chosen $\tau$'s. We will show in
Theorem~\ref{thm:unique_x} below that it does not. So far, we have only
argued that $\hat{x}$ is a \emph{candidate switching flow} according to the
following definition.

\begin{definition}[Candidate Switching Flow]\label{def:cand_switching_flow}
  Let $A=(V, \, o, \,d, \, \dbar, \, s_{even}, \, s_{odd})$ be a
  terminating \arr{} instance with edges $E$, 
  $S =\{v_1, \, v_2, \, \dots, \, v_k \}\subseteq V$, 
  $w = (w_1, \, w_2, \, \dots, \, w_k) \in \N_0^k$.

   A function $x: E \rightarrow \N_0$ is a \emph{candidate switching
     flow} for $A$ (w.r.t.\ $S$ and $w$) if
\begin{equation}\label{eq:cand_flow}
\begin{array}{rccll}
x^+(Y) &=& 1, \\
  x^+(v) - x^-(v) &=& 0, & \quad v\in V\setminus S  & \text{(flow
                                                      conservation at
                                                      $V\setminus S$)}\\
  x^+(v_i) &=& w_i, & \quad i=1,2,\ldots,k, & \text{(outflow $w$ at $S$)}\\
  x_{(v, s_{even}(v))} -x_{(v, s_{odd}(v))}&\in&\{0,1\}, & \quad v\in
                                                           V&
                                                              \text{(switching
                                                              behavior)}.
\end{array}
\end{equation}
\end{definition}

\begin{theorem}[Each Multi-Run profile is the minimal candidate switching
  flow]\label{thm:unique_x}
  Let $A,E,S,w$ be as in Definition~\ref{def:cand_switching_flow} and let $\hat{x}$ be a
  \emph{Multi-Run profile} of $A$, meaning that $\hat{x}_e$ is the number of
  times edge $e\in E$ was traversed during some run of 
  Algorithm~\ref{algo:multi_run_procedure} (Multi-Run Procedure).
  Then the following statements hold.
  \begin{itemize}
\item[(i)]  $\hat{x}\leq x$ for all candidate switching flows $x$ (w.r.t.\
  $S$ and $w$). In particular,
  $\hat{x}$ is the unique minimizer of the total flow
  $\sum_{e\in E} x_e$ over all candidate switching flows.
\item[(ii)] For fixed $A, E, S$, define
      $F(w) = (\hat{x}^-(v_1),\ldots, \hat{x}^-(v_k))\in \N_0^k$. Then
      the function $F: \N_0^k\rightarrow \N_0^k$ is \emph{monotone},
      meaning that $w\leq w'$ implies that $F(w)\leq F(w')$.
\end{itemize}
\end{theorem}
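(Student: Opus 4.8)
The plan is to prove both parts by the same kind of ``greedy simulation dominates any feasible solution'' argument that underlies Theorems~\ref{theorem:dohrau} and~\ref{thm:run_profile}, adapted to the multi-run setting. For part~(i), fix an arbitrary candidate switching flow $x$ (w.r.t.\ $S$ and $w$) and run Algorithm~\ref{algo:multi_run_procedure} (Multi-Run Procedure), maintaining the partial traversal count $y : E \to \N_0$ (so $y$ starts at the all-zero function and ends at $\hat{x}$). I would show by induction on the steps of the procedure that $y \leq x$ holds at every moment, and moreover that the invariant is accompanied by a ``potential'' bound at each vertex: for $v \in V \setminus S$, the number $t(v)$ of trains currently waiting at $v$ satisfies $t(v) \leq x^-(v) - y^-(v) + \big(y^+(v) - (\text{already-dispatched outflow of } v)\big)$, which after unwinding just says that whatever is waiting at $v$ is still ``covered'' by the slack $x^-(v) - y^-(v)$ plus the switching flexibility. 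Concretely: when we dispatch $\tau$ trains from $v$ to its current/next successors, splitting as $\ceil{\tau/2}$ and $\floor{\tau/2}$ with a swap when $\tau$ is odd, the cumulative outflow of $y$ at $v$ along the two edges $(v, s_{even}(v))$ and $(v, s_{odd}(v))$ stays in the pattern $\{(\lceil m/2\rceil, \lfloor m/2\rfloor), (\lfloor m/2\rfloor, \lceil m/2\rceil)\}$ for $m$ the total dispatched so far; since $x$ satisfies switching behavior, i.e.\ $x_{(v,s_{even}(v))} - x_{(v,s_{odd}(v))} \in \{0,1\}$, and $x^+(v) = x^-(v) \geq y^-(v) \geq m$ (flow conservation at $v$, using the inflow bound $y^- \leq x^-$ which is what the induction is carrying), one checks that both edge counts of $y$ at $v$ remain $\leq$ the corresponding edge counts of $x$. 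For $v_i \in S$ the outflow of $y$ along the two edges out of $v_i$ is fixed once and for all at $(\lceil w_i/2\rceil, \lfloor w_i/2\rfloor)$, which is again dominated by $x$ edgewise because $x^+(v_i) = w_i$ and $x$ has switching behavior at $v_i$. The base case is the initialization step (the single traversal of $(Y,o)$ and the $w_i$ traversals out of each $v_i$), which is dominated by $x$ for the same reasons together with $x^+(Y) = 1$. Since the procedure terminates (Lemma~\ref{m_termination}) with $y = \hat{x}$, we get $\hat{x} \leq x$. Because $\hat{x}$ is itself a candidate switching flow (established before the theorem) and is componentwise below every candidate switching flow, it is the unique minimizer of $\sum_e x_e$, proving~(i).

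For part~(ii), let $w \leq w'$ and let $\hat{x}$, $\hat{x}'$ be Multi-Run profiles for $w$ and $w'$ respectively; by part~(i) both are well-defined (order-independent) minimal candidate switching flows. The clean way is to observe that $\hat{x}'$, viewed as a flow, has outflow $w'_i \geq w_i$ at each $v_i$, but with a little care it ``contains'' a candidate switching flow for $w$: I would run the Multi-Run Procedure for the pair $(S, w)$ but only ever dispatch trains using traversal counts that stay componentwise below $\hat{x}'$, which is possible precisely because $\hat{x}'$ restricted to the smaller start vector still has enough inflow at every vertex to keep feeding the $w$-run (formally, one repeats the invariant of part~(i) with $x := \hat{x}'$, noting the only place $w$ entered was the inequality $y^+(v_i) \le x^+(v_i)$, i.e.\ $w_i \le w'_i$, which now holds). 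Hence $\hat{x} \leq \hat{x}'$ componentwise, and in particular $\hat{x}^-(v_i) \leq (\hat{x}')^-(v_i)$ for each $i$, i.e.\ $F(w) \leq F(w')$.

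I expect the main obstacle to be bookkeeping the induction invariant in part~(i) cleanly — specifically, pinning down exactly which quantity stays dominated and handling the odd-$\tau$ swap, since at an intermediate moment the two edges out of $v$ need not satisfy the $\{0,1\}$ difference condition themselves (only their sum relative to $x$'s edge values matters). The right formulation is to track, for each $v \in V\setminus S$, the pair $\big(y_{(v,s_{even}(v))}, y_{(v,s_{odd}(v))}\big)$ and show it is always of the form $(\lceil m/2 \rceil, \lfloor m/2 \rfloor)$ or $(\lfloor m/2 \rfloor, \lceil m/2 \rceil)$ with $m = y^+(v) \le x^+(v) = x^-(v)$, and that for any integer $m$ with $0 \le m \le M$ and any $a,b$ with $a+b=M$, $a-b\in\{0,1\}$, one has $\lceil m/2\rceil \le a$ and $\lfloor m/2\rfloor \le b$ as well as $\lfloor m/2\rfloor \le a$ and $\lceil m/2\rceil\le b$ — a two-line arithmetic fact. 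Once this local domination lemma is isolated, the global induction and both parts of the theorem follow routinely, and part~(ii) is essentially a corollary of the proof of part~(i) rather than a separate argument. The special case $S = \emptyset$ of the theorem recovers Theorem~\ref{thm:run_profile}, as promised in the excerpt.
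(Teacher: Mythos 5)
Your part (i) is, in substance, the paper's own pebble argument in different bookkeeping: the paper puts $x_e$ pebbles on each edge and maintains, for every $v\in V\setminus S$, the invariants $p^+(v)=p^-(v)+t(v)$ and $p((v,s_{curr}(v)))-p((v,s_{next}(v)))\in\{0,1\}$ for the remaining pebbles $p$, which is exactly your domination-plus-even/odd-split induction phrased in terms of $x-y$ instead of $y$. One auxiliary claim in your write-up is false as stated, however: in the ``two-line arithmetic fact'', the second half ($\lfloor m/2\rfloor\le a$ and $\lceil m/2\rceil\le b$) fails when $m=M$ is odd (take $M=3$, $a=2$, $b=1$, $m=3$). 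Fortunately that half is never needed: since every $v\in V\setminus S$ starts with the even successor as current and the procedure alternates strictly, the cumulative split at $v$ after $m$ dispatched trains is always exactly $\bigl(\lceil m/2\rceil,\lfloor m/2\rfloor\bigr)$ on the pair $\bigl((v,s_{even}(v)),(v,s_{odd}(v))\bigr)$; the swapped pattern with $m$ odd cannot occur. Strengthen your invariant accordingly (together with ``$s_{curr}[v]=s_{even}(v)$ iff $m$ is even''), and only the true half of the arithmetic fact is used; the induction then closes exactly as you describe, via $m+\tau\le y^-(v)\le x^-(v)=x^+(v)$.

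For (ii) you take a mildly different route from the paper. The paper couples the two runs: it executes the $w'$-run so that it first simulates the $w$-run while the extra $w'-w$ trains wait, and then observes that releasing them can only increase the inflows at $S$. You instead rerun the domination induction of (i) with $x:=\hat{x}'$, noting that the only place the outflow prescription at $S$ enters is through $\lceil w_i/2\rceil\le \hat{x}'_{(v_i,s_{even}(v_i))}$ and $\lfloor w_i/2\rfloor\le \hat{x}'_{(v_i,s_{odd}(v_i))}$, which follow from $w_i\le w'_i$ and switching behavior of $\hat{x}'$ at $v_i$. This is valid and in fact yields the slightly stronger conclusion $\hat{x}\le\hat{x}'$ edgewise, from which $F(w)\le F(w')$ is immediate; both your argument and the paper's rely on part (i) (order-independence of the profile) in the same way. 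Net assessment: essentially the paper's approach, correct once the false (and unnecessary) half of your auxiliary lemma is excised by the stronger invariant.
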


\begin{proof}
  We prove part (i) by the \emph{pebble argument}~\cite{Dohrau2017}:
  Let $x$ be any candidate switching flow w.r.t.\ $S$ and $w$.  For
  every edge $e$, we initially put $x_e$ pebbles on $e$, and whenever
  a train traverses $e$ in Algorithm~\ref{algo:multi_run_procedure}
  (Multi-Run Procedure), we let it collect
  a pebble. If we can show that we never run out of pebbles,
  $\hat{x}\leq x$ follows. By ``running out of pebbles'', we concretely
  mean that we are for the first time trying to collect a pebble from
  an edge with no pebbles left.

  Since $x$ is a candidate switching flow, we cannot run out of
  pebbles while starting the trains. In fact, we exactly collect all
  the pebbles on the outgoing edges of $\{Y\}\cup S$. It remains to
  show that we cannot run out of pebbles while processing a picked vertex
  $v\in V\setminus S$. For this, we prove that we maintain the
  following additional invariants (which hold immediately after starting the
  trains). Let $p:E\rightarrow \N_0$ record for each edge $e$ the
  remaining number of pebbles on $e$. Then for all $v\in V\setminus S$,
  \begin{itemize}
  \item[(a)] $p^+(v)= p^-(v) + t(v)$, where $t(v)$ is the number of trains
    waiting at $v$; 
  \item[(b)] $p ((v, s_{curr}(v))) - p ((v, s_{next}(v)))\in\{0,1\}$.
  \end{itemize}
  Suppose that these invariants hold when picking a vertex $v\in V\setminus
  S$. As we have not run out of pebbles before, $p^-(v)\geq 0$ and (a)
  guarantees that we have $q\geq t(v)$ pebbles on the outgoing edges;
  by (b), $\ceil{q/2}$ of them are on $(v, s_{curr}(v))$ and
  $\floor{q/2}$ on $(v, s_{next}(v))$. From the former, we collect
  $\ceil{\tau/2}$, and from the latter $\floor{\tau/2}$ where
  $\tau\leq t(v)\leq q$, so we do not run out of pebbles. We maintain (a) at
  $v$ where both $p^+$ and $t$ are reduced by $\tau$.  We
  also maintain (a) at the successors; there, the
  gain in $t$ exactly compensates the loss in $p^-$. Finally, we
  maintain (b) at $v$: If $\tau$ is even, both $p ((v, s_{curr}(v)))$
  and $p ((v, s_{next}(v)))$ shrink by $\tau/2$. If $\tau$ is
  odd, we have $p ((v, s_{curr}(v))) - p ((v, s_{next}(v)))\in\{-1,0\}$
  after collecting one more pebble from $(v, s_{curr}(v))$
  than from $(v, s_{next}(v))$, but then we reverse the sign
  by swapping $s_{curr}$ and $s_{next}$.

  For $S=\emptyset$, this proves Theorem~\ref{thm:run_profile}, and
  for general $S$, we have now proved (i). In particular, the order in
  which we move trains in Algorithm~\ref{algo:multi_run_procedure} (Multi-Run Procedure) does not matter.

  The proof of (ii) is now an easy consequence; recall that the inflow
  $F(w)_i$ is the number of trains that arrive at $v_i$. If
  $w\leq w'$, we run Algorithm~\ref{algo:multi_run_procedure} (Multi
  Run Procedure) with input $w'$ such that it first simulates a run
  with input $w$; for this, we keep the extra trains corresponding to
  $w'-w$ waiting where they are after the start, until all other
  trains have terminated. At this point, we have inflow $f\geq F(w)$
  at $S$, where $f-F(w)$ corresponds to the extra trains that have
  already reached $S$ right after the start. We finally run the extra trains
  that are still waiting, and as this can only further increase the
  inflows at $S$, we get $F(w')\geq f\geq F(w)$.
\end{proof}

\subsection{Runtime}

As we have proved in Theorem~\ref{thm:unique_x}~(i), the Multi-Run
procedure always generates the unique flow-minimal candidate switching
flow. But the number of steps depends on the order in which vertices
$v\in V\setminus S$ are picked, and on the chosen $\tau$'s. We start
with an upper bound on the number of edge traversals that generalizes
Lemma~\ref{lem:termination}. 

\begin{lemma}\label{lem:m_termination}
  Let $A=(V, \, o, \,d, \, \dbar, \, s_{even}, \, s_{odd})$ be a
  terminating \arr{} instance, $|V|=n$,
  $S =\{v_1, \, v_2, \, \dots, \, v_k \}\subseteq V$,
  $w = (w_1, \, w_2, \, \dots, \, w_k) \in \N_0^k$.
  Let $\ell$ be the maximum length of the shortest path from a vertex
  in $V\setminus S$ to a vertex in $\dest \cup S$.
  Further suppose that at the beginning of some iteration in
  Algorithm~\ref{algo:multi_run_procedure} (Multi-Run Procedure), $R$
  trains are still waiting. Then all subsequent iterations traverse at
  most $R((n - \ell + 2) 2^\ell-2)$ edges in total.
\end{lemma}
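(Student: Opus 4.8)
The plan is to reduce the statement to the single-train bound of Lemma~\ref{lem:termination} by bounding the contribution of each waiting train separately. Concretely, I would show that each of the $R$ trains waiting at the start of the given iteration triggers at most $(n-\ell+2)2^\ell - 2$ edge traversals before it terminates (in $\dest \cup S$), and then sum over the $R$ trains. Here ``triggers'' needs to be made precise: in the Multi-Run procedure a single picked vertex may simultaneously move several trains to (possibly the same) successors, so there is no clean one-to-one correspondence between traversals and trains. The first step, then, is to argue that we may without loss of generality restrict attention to runs in which every iteration picks $\tau = 1$, i.e.\ moves exactly one train. This is legitimate because, by Theorem~\ref{thm:unique_x}~(i), the total number of traversals $\sum_e \hat{x}_e$ is the same for every valid run (it equals the flow-minimal candidate switching flow), so it suffices to bound the number of edge traversals for one particular run, and the one-train-at-a-time run is the most convenient.

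With $\tau = 1$ throughout, each iteration traverses exactly one edge and moves exactly one train from some $v \in V\setminus S$ to a successor. Now I would track the trajectory of each individual train. Each of the $R$ waiting trains sits at some vertex $v \in V\setminus S$; by the definition of $\ell$, the shortest path in $G(A)$ from $v$ to $\dest \cup S$ has length at most $\ell$. Applying the counting argument of Lemma~\ref{lem:visits} (adapted so that ``destination'' means ``a vertex of $\dest \cup S$'', and noting that trains are simply removed once they reach $S$, exactly as they are at $d,\dbar$): if $n_i$ denotes the number of vertices of $V\setminus S$ at shortest-path distance exactly $i$ from $\dest\cup S$, then the total number of visits to vertices of $V\setminus S$ caused by one train, hence the number of edge traversals it performs, is at most $\sum_{i=1}^{\ell} n_i 2^i$. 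Since $\sum_i n_i = |V\setminus S| \le n$ and $n_i \ge 1$ for every $i \le \ell$ (there is a vertex at each such distance, because shortest paths have intermediate vertices at every intermediate distance), this sum is maximized when $n_i = 1$ for $i < \ell$ and $n_\ell = n - \ell + 1$, giving the value $(n-\ell+2)2^\ell - 2$, exactly as in Lemma~\ref{lem:termination}.

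Summing this bound over the $R$ trains yields the claimed total of $R\bigl((n-\ell+2)2^\ell - 2\bigr)$ traversals in all subsequent iterations. The one subtlety to handle carefully — and the step I expect to be the main obstacle — is the interaction between trains through the shared $s_{curr}/s_{next}$ state at each vertex. With $\tau=1$ the state still toggles after every move, so from the vantage point of a single vertex $v$ the trains passing through it alternate successors exactly as in the single-train run, regardless of interleaving; this is what lets the Lemma~\ref{lem:visits} bound apply per-train. I would spell out that the argument is really about the number of visits to $v$ rather than about ``one'' train, so that the potential-style counting ($v_i$ visited $\ge 2^i$ times before $v$ is over-visited) goes through verbatim with $\dest \cup S$ as the target set and the per-train accounting is just a way of organizing the $R$-fold sum. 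Everything else is the routine maximization already carried out in Lemma~\ref{lem:termination}.
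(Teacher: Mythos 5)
Your proposal is correct and follows the paper's strategy: invoke Theorem~\ref{thm:unique_x}~(i) to argue that the number of traversals in the continuation is the same for every way of continuing Algorithm~\ref{algo:multi_run_procedure}, then analyze one convenient continuation using the counting of Lemmata~\ref{lem:visits} and~\ref{lem:termination} with ``destination'' reinterpreted as any vertex of $\dest\cup S$. The only real divergence is the choice of canonical continuation: the paper runs the $R$ trains sequentially, each to completion before the next starts, so each train literally executes Algorithm~\ref{algo:run_procedure} from its current position and the per-train bound $(n-\ell+2)2^\ell-2$ follows verbatim; you instead allow arbitrary interleavings with $\tau=1$, under which the per-train bound you first state is not literally true (a single train's successive visits to a vertex need not alternate successors, because other trains flip the switch in between), and you then correctly retreat to the aggregate version of the argument --- among the first $R\cdot 2^i$ visits to a vertex at distance $i$ from $\dest\cup S$, at least $R$ arrivals in $\dest\cup S$ occur, so each such vertex is visited at most $R\cdot 2^i$ times in total, and summing over vertices gives the claimed bound. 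Both routes work; the paper's sequential schedule simply avoids redoing the counting with the factor $R$, so in a write-up you should either adopt it or spell out the $R$-fold counting explicitly rather than calling it ``verbatim''.
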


\begin{proof}
  We continue to run each of the $R$ waiting trains individually and
  proceed with the next one only when the previous one has
  terminated. In Algorithm~\ref{algo:multi_run_procedure} (Multi-Run
  Procedure), this corresponds to always using $\tau=1$ and the next
  vertex $v$ as the head of the previously traversed edge, for each of
  the $R$ trains. So we effectively perform Algorithm~\ref{algo:run_procedure} (Run
  Procedure) for $R$ trains.

  As each train terminates once it reaches a vertex in $S\cup\dest$,
  Lemmata~\ref{lem:visits} and~\ref{lem:termination} are easily seen
  to hold also here, after redefining ``destination'' as any vertex in
  $S\cup\dest$. As a consequence, each train traverses at most
  $(n - \ell + 2) 2^\ell-2$ edges until it reaches a vertex in
  $\dest \cup S$.  This leads to at most $R((n - \ell + 2) 2^\ell-2)$
  edge traversals overall. By Theorem~\ref{thm:unique_x}~(i),
  this upper bound holds for all ways of continuing
  Algorithm~\ref{algo:multi_run_procedure} (Multi-Run Procedure).
\end{proof}

With $R=W:=1+\sum_{i=1}^k w_i$, we obtain an upper bound
for the total number of loop iterations since each iteration traverses at
least one edge. But it turns out that we can be significantly faster (and polynomial in
the encoding size of $W$) when we proceed in a greedy fashion, i.e.\
we always pick the next vertex as the one with the largest number of
waiting trains, and move all these trains at once.

\begin{lemma}\label{lem:fast_m_termination}
  Let $A, n, S, w, \ell$ as in Lemma~\ref{lem:m_termination}, and suppose
  that in each iteration of Algorithm~\ref{algo:multi_run_procedure}
  (Multi-Run Procedure), we pick $v\in V\setminus S$ maximizing $t[v]$
  and further choose $\tau=t[v]$.  Then the number of iterations is at
  most $(\ln W+n)(n-k)((n - \ell + 2) 2^\ell-2)$, where $W = 1+ \sum_{i=1}^k w_i$.
\end{lemma}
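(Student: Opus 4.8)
The plan is to bound the number $N$ of executions of the \textbf{while}-loop of Algorithm~\ref{algo:multi_run_procedure} by combining two ingredients: a global budget on how many (train, edge)-traversals can still occur, and the fact that the greedy rule always moves at least a $1/(n-k)$-fraction of the trains currently waiting. We may assume $n>k$, since otherwise $V\setminus S=\emptyset$ and $N=0$. Write $C:=(n-\ell+2)2^\ell-2$. For the $j$-th iteration let $T_j$ be the number of waiting trains at its start, let $\tau_j=t[v_j]\ge 1$ be the number of trains it moves, and let $B_j:=\sum_{j'\ge j}\tau_{j'}$ be the total number of traversals performed in iteration $j$ together with all later ones, so that $B_{j+1}=B_j-\tau_j$, the $B_j$ strictly decrease, and $B_{N+1}=0$.

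First I would collect three facts. (i) Applying Lemma~\ref{lem:m_termination} at the start of iteration $j$ with $R=T_j$ gives $B_j\le C\,T_j$; taking $j=1$ and noting that at most $W$ trains are ever present yields $B_1\le C\,T_1\le CW$. (ii) Because iteration $j$ picks $v_j$ maximizing $t[v]$ over the $n-k$ vertices of $V\setminus S$ and sets $\tau_j=t[v_j]$, we have $\tau_j\ge T_j/(n-k)$. (iii) Combining (i) and (ii), $B_j-B_{j+1}=\tau_j\ge T_j/(n-k)\ge B_j/\bigl((n-k)C\bigr)$, so $(B_j-B_{j+1})/B_j\ge 1/\bigl((n-k)C\bigr)$ for every $j$.

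Summing (iii) over $j=1,\dots,N$ gives $N/\bigl((n-k)C\bigr)\le\sum_{j=1}^N (B_j-B_{j+1})/B_j$, and the plan is to bound the right-hand side by a harmonic number. Since for each $j$ the integers $m$ with $B_{j+1}<m\le B_j$ number exactly $B_j-B_{j+1}$ and each satisfies $1/m\ge 1/B_j$, we get $(B_j-B_{j+1})/B_j\le\sum_{m=B_{j+1}+1}^{B_j}1/m$; as $j$ runs from $1$ to $N$ these index sets partition $\{1,\dots,B_1\}$, so the sum is at most $\sum_{m=1}^{B_1}1/m\le 1+\ln B_1\le 1+\ln(CW)=\ln W+1+\ln C$. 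Hence $N\le (n-k)C\,(\ln W+1+\ln C)$, and it remains to absorb $1+\ln C$ into $n$: using the crude bound $C<2^{n+1}$ — which follows since $(n-\ell+2)2^\ell$ is non-decreasing in $\ell$ on $[1,n]$, with value $2^{n+1}$ at $\ell=n$ — we get $1+\ln C<1+(n+1)\ln 2\le n$ for $n$ not too small, and the remaining handful of small values of $n$ are checked directly. This yields $N\le (n-k)C\,(\ln W+n)=(\ln W+n)(n-k)\bigl((n-\ell+2)2^\ell-2\bigr)$.

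I expect the only real subtlety to be fact (i): Lemma~\ref{lem:m_termination} must be read as a bound on the traversals still to come from an arbitrary mid-run configuration with $R$ waiting trains, and one should note that by Theorem~\ref{thm:unique_x}(i) the total traversal count (equivalently, the Multi-Run profile) is independent of the schedule, so the ``run each train individually'' estimate behind Lemma~\ref{lem:m_termination} indeed upper-bounds what the greedy schedule does. Everything else — the greedy averaging bound (ii), the telescoping harmonic estimate, and $\ln C\le n$ — is routine.
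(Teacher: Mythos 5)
Your argument rests on exactly the two ingredients the paper uses: (a) Lemma~\ref{lem:m_termination}, read from an arbitrary mid-run configuration (and justified via the schedule-independence of Theorem~\ref{thm:unique_x}(i)), which bounds the traversals still to come by $C\,T_j$ with $C=(n-\ell+2)2^\ell-2$; and (b) the fact that the greedy rule moves at least a $1/(n-k)$ fraction of the currently waiting trains. The difference is only in the final bookkeeping: the paper tracks the remaining traversal budget (its pebbles) and shows it shrinks by a factor $1-\frac{1}{(n-k)C}$ per iteration from at most $WC$, so after $s=(\ln W+n)(n-k)C$ iterations fewer than one pebble remains, which needs only $C<e^n$. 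Your telescoping/harmonic-sum accounting is a legitimate alternative but is slightly lossier: it yields $N\le (n-k)C\,(\ln W+1+\ln C)$ and therefore forces you to prove $1+\ln C\le n$ rather than just $\ln C<n$.

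That last absorption step is where the genuine gap is. The inequality $1+\ln C\le n$ is false for small $n$ when $\ell$ is close to $n$: for $n=1$, $\ell=1$ (one vertex with a self-loop and an edge to a destination) one has $C=2$ and $1+\ln 2>1$; more generally, for $\ell=n$ one has $C=2^{n+1}-2$ and $1+\ln C>n$ for all $n\le 5$. So your fallback that ``the remaining handful of small values of $n$ are checked directly'' does not go through: the inequality you would be checking is simply wrong there, and verifying the lemma itself for those $n$ is not a finite check, since $W$, $k$ and $\ell$ still range over infinitely many instances. Note that $C\le 2^{n+1}-2<e^n$ does hold for every $n\ge 1$, which is precisely why the paper's multiplicative estimate delivers the stated bound for all $n$. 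Your proof is repaired with no other changes by replacing the harmonic estimate with that step: from your fact (iii), the remaining budget after $s$ iterations is at most $WC\left(1-\frac{1}{(n-k)C}\right)^{s}\le WC\,e^{-s/((n-k)C)}$, which is below $1$ for $s=(\ln W+n)(n-k)C$.
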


\begin{proof}
  As in the proof of Theorem~\ref{thm:unique_x}, we let each train
  collect a pebble as it traverses an edge, where we initially put
  $\hat{x}_e$ pebbles on edge $e$, with $\hat{x}$ being the unique Multi-Run
  profile. This means that we eventually collect all pebbles. Now
  consider an iteration and suppose that $R\leq W$ trains are still
  waiting. In the greedy algorithm, we move at least $R/(n-k)$ of them in this
  iteration and collect at least that many pebbles. On the other
  hand, with $R$ trains still waiting, and with $T=(n - \ell + 2) 2^\ell-2$, there can be no more than
  $RT$ pebbles left, as all of them will be collected in the remaining
  at most that many edge traversals, due to
  Lemma~\ref{lem:m_termination}.

  In summary, the number of pebbles is guaranteed to be reduced by a factor of
  \[
    \left (1-\frac{1}{(n-k)T}\right)
   \]
   in each iteration, starting from at most $WT$ pebbles before the
   first iteration. After $s=(\ln W+n) (n-k)T$ iterations, we
   therefore have at most 
   \[
     \left (1- \frac{1}{(n-k)T}\right)^s WT \leq e^{-\ln W-n} WT<1
   \]
   pebbles left (using
   $T<e^n$). Hence, after at most $s$ iterations, the greedy version of
   Algorithm~\ref{algo:multi_run_procedure} (Multi-Run Procedure) has
   indeed terminated. 
 \end{proof}

 We remark that essentially the same runtime can be achieved by a
 round robin version that repeatedly cycles through $V\setminus S$
 in some fixed order.

\subsection{Tarski fixed points}\label{sec:tarski}
Tarski fixed points arise in the study of order-preserving functions
on complete lattices~\cite{Tarski1955}. For our application, it
suffices to consider finite sets of the form
$L = \{0, \, 1, \, \dots, \, N\}^k$ for some $N, k \in \N^+$. For such
a set, Tarski's fixed point theorem~\cite{Tarski1955} states that any
monotone function $D : L \rightarrow L$ has a fixed point, some
$\hat{w}\in L$ such that $D(\hat{w})=\hat{w}$. Moreover, the problem of finding such a
fixed point has been studied: Dang, Qi and Ye~\cite{Dang2020} have
shown that a fixed point can be found using $\bigO(\log^k N)$
evaluations of $D$. Recently, Fearnley, P\'alv{\"o}lgyi and Savani~\cite{Fearnley2020}
improved this to $\bigO(\log^{2\ceil{k/3}} N)$.

Via Theorem~\ref{thm:unique_x}, we have reduced the problem of
deciding a terminating \arr\ instance to the problem of finding a
fixed point of a monotone function $F:\N_0^k\rightarrow\N_0^k$,
assuming that we can efficiently evaluate $F$. Indeed, if we have such
a fixed point, the corresponding (flow-minimal) candidate switching flow
is an \emph{actual} switching flow and hence decides the problem via
Theorem~\ref{theorem:dohrau}.

The function $F$ depends on a set $S\subseteq V$ of size $k$ that
we can choose freely (we will do so in the subsequent sections).

Here, we still need to argue that we can restrict $F$ to a finite set
$L=\{0, \, 1, \, \dots, \, N\}^k$ so that the Tarski fixed point
theorem applies. We already know that outflow (and hence inflow)
values never exceed $N=2^n$ in \emph{some} switching flow, namely the run profile
(Lemma~\ref{lem:visits}), so we simply restrict $F$ to this range
and at the same time cap the function values accordingly.

\begin{lemma}\label{lem:tarski}
Let $A=(V, \, o, \, d, \, \dbar, \, s_{even}, \, s_{odd})$ be a
  terminating \arr{} instance, $S=\{v_1,\ldots,v_k\}\subseteq
  V$, $|V|=n$. Let $F$ be the function defined in
  Theorem~\ref{thm:unique_x}~(ii), let $N=2^n$ and consider the function
  $D: \{0, \, 1, \, \dots, \, N\}^k\rightarrow \{0, \, 1, \, \dots, \,
  N\}^k$ defined by
  \[
    D(w) = \left(\begin{array}{l}
                   \min (N, F(w)_1)\\
                   \min (N, F(w)_2)\\
                   \vdots \\
                    \min (N, F(w)_k)
                 \end{array}
               \right), \quad w\in \{0, \, 1, \, \dots, \, N\}^k.\]
Then $D$ is monotone and has a fixed point $\hat{w}$ that can be found with
  $\bigO(\log^{2\ceil{k/3}} N)$ evaluations of $D$. Moreover, $\hat{w}$ is also a fixed
  point of $F$, and when we apply Theorem~\ref{thm:unique_x}~(i) with $w=\hat{w}$,
  the flow-minimal candidate switching flow resulting from the
  multi-run procedure is a switching flow for $A$.
\end{lemma}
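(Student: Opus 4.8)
The plan is to decompose the statement into four parts: (1) $D$ is a monotone self-map of the finite lattice $L=\{0,\dots,N\}^k$, so that Tarski's theorem and the known algorithms apply; (2) a fixed point $\hat w$ is then found with $\bigO(\log^{2\ceil{k/3}}N)$ evaluations of $D$; (3) every fixed point of $D$ is in fact a fixed point of $F$, not merely of the capped map; and (4) for such a fixed point, the flow-minimal candidate switching flow produced by the Multi-Run procedure is a switching flow for $A$. Parts (1), (2), (4) are short. For (1), $F$ is monotone by Theorem~\ref{thm:unique_x}(ii), and post-composing each coordinate with the monotone map $t\mapsto\min(N,t)$ preserves monotonicity; since $F(w)_i\ge 0$, the capping at $N$ lands in $L$, so $D:L\to L$ is a monotone self-map. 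Part (2) is then immediate from the algorithm of Fearnley, P\'alv{\"o}lgyi and Savani~\cite{Fearnley2020}. For (4): if $F(\hat w)=\hat w$ and $y$ is the Multi-Run profile for the guess $\hat w$, then $y^+(v_i)=\hat w_i=F(\hat w)_i=y^-(v_i)$, so flow conservation holds at $S$ as well; together with the remaining conditions of Definition~\ref{def:cand_switching_flow} this is exactly Definition~\ref{def:switching_flow}, so $y$ is a switching flow for $A$.

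The heart of the argument is part (3): the capping could in principle create ``spurious'' fixed points of $D$, namely coordinates with $\hat w_i=N$ and $F(\hat w)_i>N$, and these must be ruled out. First I would pin down the least fixed point of $D$. Let $\hat{x}$ be the run profile of $A$ (Theorem~\ref{thm:run_profile}) and set $w^*_i:=\hat{x}^+(v_i)=\hat{x}^-(v_i)$; by Lemma~\ref{lem:visits}, $w^*_i\le 2^n=N$ for all $i$, so $w^*\in L$. Since $\hat{x}$ is a switching flow, it is in particular a candidate switching flow for the guess $w^*$, so by Theorem~\ref{thm:unique_x}(i) the Multi-Run profile for $w^*$ is $\le\hat{x}$, hence $F(w^*)\le w^*$. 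Conversely, every fixed point $w$ of $F$ gives (by restored flow conservation at $S$) a genuine switching flow, which dominates $\hat{x}$ by Theorem~\ref{thm:run_profile} and therefore has outflows $\ge w^*$. Iterating the monotone $F$ downwards from $w^*$ inside $L$, the decreasing orbit converges to a fixed point of $F$ that is $\le w^*$ and, by the converse just stated, $\ge w^*$; hence $F(w^*)=w^*$, so $w^*$ is a fixed point of $D$. Finally, if $w'$ is the least fixed point of $D$ then $w'\le w^*$, and $D(w')=w'$ forces $F(w')_i\ge w'_i$ with equality whenever $w'_i<N$; since $w'\le w^*$ gives $F(w')\le F(w^*)=w^*$, coordinates with $w'_i=N$ satisfy $N\le F(w')_i\le w^*_i\le N$, so $F(w')=w'$ — thus $w'$ is a fixed point of $F$, hence $\ge w^*$, i.e.\ $w'=w^*$.

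It then remains to show that an \emph{arbitrary} fixed point $\hat w$ of $D$ — in particular the one returned by the algorithm — satisfies $F(\hat w)=\hat w$, which is what makes part (2) harmless. Here $\hat w\ge w'=w^*$ since $\hat w$ dominates the least fixed point, and $D(\hat w)=\hat w$ already gives $F(\hat w)_i\ge\hat w_i$ for all $i$, so the only thing to exclude is strict inequality somewhere. For this I would count trains in the Multi-Run procedure run with the guess $\hat w$: the number $1+\sum_i\hat w_i$ of trains started equals $t[d]+t[\dbar]+\sum_i F(\hat w)_i$. Since $\hat w\ge w^*$, the monotonicity argument in the proof of Theorem~\ref{thm:unique_x}(ii) shows this run dominates the run with guess $w^*$; the latter is a switching flow, which dominates the run profile $\hat{x}$ and hence sends at least one unit of flow to $\{d,\dbar\}$, so $t[d]+t[\dbar]\ge 1$. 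Therefore $\sum_i\big(F(\hat w)_i-\hat w_i\big)\le 0$, and since every summand is nonnegative, $F(\hat w)=\hat w$. I expect this last step — excluding a capped coordinate with $F(\hat w)_i>N$ — to be the main obstacle; it is precisely here that the train-count identity and the domination of the genuine run profile by the Multi-Run with the correct guess $w^*$ both come into play.
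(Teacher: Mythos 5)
Your proposal is correct, and the Tarski/monotonicity part, the $\bigO(\log^{2\ceil{k/3}}N)$ bound via Fearnley--P\'alv{\"o}lgyi--Savani, and the final observation that a fixed point of $F$ restores flow conservation at $S$ (hence yields a switching flow) all coincide with the paper. Where you genuinely diverge is the crucial step of ruling out ``spurious'' fixed points of the capped map $D$. The paper argues locally and by contradiction: if some coordinate $j$ is capped, the counting inequality $\sum_i F(\hat w)_i\leq 1+\sum_i\hat w_i$ forces \emph{all} emitted flow, including the unit from $Y$, to be absorbed in $S$ with nothing reaching $\dest$; but then $\hat w_j=N=2^n$ outflow units at $v_j$, together with the doubling argument of Lemma~\ref{lem:visits} (flow conservation at every $v\neq v_j$ plus switching behavior), force at least one unit to reach $\dest$ --- a contradiction. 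You instead argue globally: you show that the run-profile outflow vector $w^*$ is a fixed point of $F$ (via $F(w^*)\leq w^*$, the decreasing orbit, and the fact that every fixed point of $F$ dominates $w^*$ by Theorem~\ref{thm:run_profile}), that it is the least fixed point of $D$, and that therefore any fixed point $\hat w$ of $D$ satisfies $\hat w\geq w^*$; edge-wise domination of Multi-Run profiles (a mild strengthening of Theorem~\ref{thm:unique_x}~(ii), but correctly justified by the same simulation argument, not just the stated inflow monotonicity) then gives $t[d]+t[\dbar]\geq 1$, and the train-count identity kills any strict capping. Both arguments are sound. The paper's is more self-contained and uses the size $N=2^n$ directly at the capped coordinate; yours uses $2^n$ only to ensure $w^*$ lies in the lattice, but needs Theorem~\ref{thm:run_profile} and the strengthened monotonicity, and in exchange delivers extra structural information: the least fixed point of $D$ is exactly the run-profile outflow vector, and the Multi-Run profile at that fixed point is the run profile itself --- a nice complement to the paper's remark that generic Tarski algorithms cannot be guaranteed to find this particular fixed point.
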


We remark that the switching flow obtained in this way is not
necessarily flow-minimal, so we cannot argue that we obtain the run
profile of $A$ as defined in Theorem~\ref{thm:run_profile}. The function $D$
may have several fixed points, each of them leading to a different
switching flow; to obtain the run profile, we would have to find a
particular fixed point, the one that leads to the unique switching flow of
smallest total flow. The known Tarski fixed point algorithms
cannot do this, and we do not know of any efficient method for computing
the run profile from a given switching flow.

\begin{proof}
  Monotonicity is clear: if $w\leq w'$, then $F(w)\leq F(w')$ by monotonicity of $F$; see Theorem~\ref{thm:unique_x}~(ii). But then also 
  $D(w)\leq D(w')$ for the capped values. Hence, the Tarski fixed
  point theorem~\cite{Tarski1955} yields a fixed point $\hat{w}$ of
  $D$, and the algorithm of Fearnley, P\'alv{\"o}lgyi and Savani~\cite{Fearnley2020} finds it
  using $\bigO(\log^{2\ceil{k/3}} N)=\bigO(n^{2\ceil{k/3}})$ evaluations.
  

  It remains to prove that $\hat{w}$ is a fixed point of $F$. Suppose
  for a contradiction that it is not a fixed point. Then
  $F(\hat{w})\neq D(\hat{w})$, i.e.\ some values were actually capped, and so
  $\hat{w}_j=D(\hat{w})_j=N<F(\hat{w})_j$ for at least one $j$. As we
  also have $\hat{w}=D(\hat{w})\leq F(\hat{w})$, we get
  \begin{equation}\label{eq:Dproof1}
    \sum_{i=1}^k \hat{w}_i < \sum_{i=1}^k F(\hat{w})_i. 
  \end{equation}
  On the other hand, consider the candidate switching flow 
  (\ref{eq:cand_flow}) with $w=\hat{w}$. At most the total flow emitted (at $Y$
  and the $v_i$'s) is absorbed at $S$, so we have
 \begin{equation}\label{eq:Dproof2}
   \sum_{i=1}^k F(\hat{w})_i  \leq 1+\sum_{i=1}^k \hat{w}_i.
 \end{equation}
 Putting this together with (\ref{eq:Dproof1}), we get an equality in
 (\ref{eq:Dproof2}). In particular, $v_j$ is the only vertex whose
 inflow value was capped (by one), all emitted flow is absorbed at
 $S$, and no flow arrives at $d$ or $\dbar$.

  But this is a contradiction to $\hat{w}_j= N = 2^n$: By the same
  arguments as in the proof of Lemma~\ref{lem:visits}, based on flow
  conservation (at all $v\neq v_j$) and switching behavior, one of
  these $2^n$ outflow units is guaranteed to arrive at $\dest$.
\end{proof}

\section{SUBEXPONENTIAL ALGORITHM FOR \arr}\label{sec:subex}
In this section, we present our main application of the general
framework developed in the previous section.

Given a terminating \arr\ instance $A$ with $|V|=n$, the plan is to
construct a set $S\subseteq V$ of size $\bigO(\sqrt{n})$ such that
from any vertex, the length of the shortest path in $G(A)$ to a vertex
in $S\cup\dest$ is also bounded by roughly $\bigO(\sqrt{n})$. Since $S$ is
that small, we can find a Tarski fixed point with a subexponential number
of $F$-evaluations; and since shortest paths are that short, each
$F$-evaluation can also be done in subexponential time using the
Multi-Run procedure. An overall subexponential algorithm ensues.

\begin{lemma}
\label{lem:phi-set}
  Let $A=(V, \, o, \, d, \, \dbar, \, s_{even}, \, s_{odd})$ be a
  terminating \arr{} instance with $|V| = n$. Let $\phi\in(0,1)$ be a
  real number. In $\bigO(n)$ time, we can construct a \emph{$\phi$-set}
  $S$, meaning a set $S\subseteq V$ such that
  \begin{itemize}
  \item[(i)] $|S| \leq \phi\cdot(n+2)$;
  \item[(ii)] for all $v \in V$, the shortest path from
    $v$ to $S\cup\dest$ in $G(A)$ has length at most $\log_2(n+2) / \phi$.
  \end{itemize}
\end{lemma}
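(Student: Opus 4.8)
The plan is to build $S$ greedily by a breadth-first "layering" argument from the destinations, stopping every so often and dumping a whole layer into $S$. Concretely, set $D = \lceil \log_2(n+2)/\phi \rceil$ and consider the reverse graph of $G(A)$, i.e.\ we measure, for each vertex $v$, the distance $\dist(v)$ to the set $\dest$ along directed paths in $G(A)$. Group the vertices of $V$ into layers $L_1, L_2, L_3, \ldots$ where $L_j = \{v \in V : \dist(v) = j\}$ (vertices with $\dist(v) = 0$ are exactly $d, \dbar$, which are not in $V$; every vertex of $V$ lands in some finite layer because $A$ is terminating). Now, for each residue $r \in \{1, \ldots, D\}$, let $B_r = \bigcup_{j \equiv r \pmod D} L_j$ be the union of every $D$-th layer starting at offset $r$. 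The key point is that if we remove $B_r$ from the graph and recompute distances to $\dest \cup B_r$, then every vertex is within distance $D - 1 < D \le \log_2(n+2)/\phi$ of that set: a shortest path to $\dest$ in $G(A)$ decreases $\dist(\cdot)$ by exactly one per step, so within at most $D$ steps it either reaches $\dest$ or crosses a layer whose index is $\equiv r \pmod D$, i.e.\ a vertex of $B_r$. This gives property (ii) for $S = B_r$, for every choice of $r$.

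For property (i), I use averaging: the $D$ sets $B_1, \ldots, B_D$ partition $V$ (each vertex of $V$ is in exactly one layer, hence in exactly one $B_r$), so $\sum_{r=1}^{D} |B_r| = n$, and therefore some $B_r$ has $|B_r| \le n/D$. Since $D \ge \log_2(n+2)/\phi$, we get $|B_r| \le n\phi/\log_2(n+2) \le \phi(n+2)$ for $n \ge 2$ (and the small cases $n \le 1$ are trivial since then $\dest$ is already within distance $1$ of every vertex, or one can just take $S = V$), establishing (i). Pick $S$ to be this $B_r$.

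The running time is $\bigO(n)$: a single backward breadth-first search from $\{d,\dbar\}$ in $G(A)$ computes all the distances $\dist(v)$ — note $G(A)$ has $n+2$ vertices and $2n+1$ edges, so BFS is linear — and then a single pass over the vertices bucketed by residue of $\dist(v) \bmod D$ finds the smallest bucket. I should double-check one subtlety: BFS gives distances to $\dest$ correctly only because every vertex of $V$ can reach $\dest$ (terminating instance), so there are no infinite distances to worry about.

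The main obstacle is essentially bookkeeping rather than a genuine difficulty: one must be careful that the stated bound in (ii) is $\log_2(n+2)/\phi$, not $\lceil \cdot \rceil$, so I want the shortest-path-to-$(\dest\cup S)$ bound to come out as $D - 1$ and then verify $D - 1 \le \log_2(n+2)/\phi$; this holds because $D - 1 < D = \lceil \log_2(n+2)/\phi \rceil$ and $\lceil x \rceil - 1 < x$ for all real $x$, so in fact $D - 1 < \log_2(n+2)/\phi$ strictly, which is safely within the claimed bound. The only other thing to watch is the edge case where some layer index $\equiv r \pmod D$ exceeds the maximal distance (then $B_r$ simply omits that layer, which only helps), and the degenerate regime of very small $n$, handled separately as above.
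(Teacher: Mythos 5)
Your proof is correct, but it takes a genuinely different route from the paper. The paper uses the Leighton--Rao ball-growing technique: it scans the BFS layers $L_1,L_2,\dots$ (distance to $\dest$) in order and adds a layer to $S$ whenever its size is less than $\phi$ times the accumulated ``ball'' $U$ since the last cut; the size bound follows because the balls charged are disjoint, and the radius bound follows because between two cuts $|U|$ grows geometrically by a factor $1+\phi$, so the gap is at most $\log_2(n+2)/\log_2(1+\phi)<\log_2(n+2)/\phi$. You instead fix $D=\lceil \log_2(n+2)/\phi\rceil$, take $B_r$ to be the union of every $D$-th layer at offset $r$, and choose the residue class of smallest total size by averaging: since the $B_r$ partition $V$, some $|B_r|\le n/D\le n\phi/\log_2(n+2)\le \phi(n+2)$, while walking down a shortest path to $\dest$ (distance drops by exactly one per step) hits $B_r\cup\dest$ within $D-1<\log_2(n+2)/\phi$ steps. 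Both arguments run in $\bigO(n)$ time after one backward BFS from $\{d,\dbar\}$ (valid since the instance is terminating, so all distances are finite). Your averaging argument is arguably simpler and in fact yields a slightly stronger size bound (better by a $\log_2(n+2)$ factor) for the same radius guarantee; the paper's region-growing version is adaptive to the layer structure and is the standard tool when the cut condition is multiplicative rather than based on a fixed radius, but for this lemma either approach suffices. Your side remarks are fine but unnecessary: the inequality $n\phi/\log_2(n+2)\le\phi(n+2)$ holds for all $n\ge 0$ since $\log_2(n+2)\ge 1$, so no separate small-$n$ case is needed.
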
 
\begin{proof}
We adapt the ball-growing technique of Leighton and Rao~\cite{Leighton1999}, as explained by Trevisan~\cite{Trevisan2005}. 

We first decompose the switch graph $G(A)$ into layers based on the
distance of the vertices to a destination~\cite{Hung}. 
More formally, for $v \in V \cup \dest$, we denote by $\dist(v)$ the length of the shortest path from $v$ to $\dest$ in $G(A)$.
Then the layers are defined as $L_i := \{v \in V \cup \dest: \dist(v) = i\}$ for $i \geq 0$.
Define $\ell := \max\{\dist(v) : v \in V \}$.
We can compute the layer decomposition $(L_0, \dots, L_{\ell})$ using
breadth-first search in $\bigO(n)$ time.

Consider the following procedure that computes a $\phi$-set as a union of layers:

\begin{algorithm}[htb]
\DontPrintSemicolon
\SetKwRepeat{Do}{do}{while}
\SetKwComment{cmt}{(*}{*)}
\SetNoFillComment 
\KwIn{\arr{} instance with layer decomposition $(L_0, \dots, L_{\ell})$, $\phi \in (0,1)$}
\KwOut{a $\phi$-set $S$}
$S\gets \emptyset$ \; 
$U\gets L_0$ \;
\For{$i = 1, \dots, \ell$}{
  \If{$|L_i| < \phi |U|$}{
        $S\gets S \cup L_i$\;
	$U\gets  \emptyset$ \;

    } 
    $U\gets  U \cup L_i$\;
}
\Return{$S$}
\caption{Procedure to compute a $\phi$-set}
\end{algorithm}

It is clear that the procedure is done in $\bigO(n)$ time. To prove (i),
we observe that whenever we add a layer $L_i$ to $S$, we have
$|L_i|<\phi |U|$; moreover, the $U$'s considered in these inequalities
are mutually disjoint subsets of $V\cup\dest$. Hence, $|S|<\phi\cdot(n+2)$.

For (ii), let $v\in V$. Then  $v\in L_b$ for some $b\geq 1$. Let
$0\leq a\leq b$ be the largest index such that $L_a\subseteq
S\cup\dest$. Then the shortest path from $v$ to a vertex in
$S\cup\dest$ has length at most $b-a$. It remains to bound $j:=b-a$. The
interesting case is $j>0$.

Consider the above algorithm. After the $a$-th iteration, we have
$|U|=|L_a|\geq 1$. Moreover, $|L_i|\geq \phi|U|$ for $i=a+1,\ldots,b$, meaning that for each
iteration $i$ in this range, the size of $U$ has grown by a factor of at
least $1+\phi$. Hence, after the $b$-th
iteration, $(1+\phi)^{j}\leq |U|\leq n+2$.  This implies $j \leq \log_2(n+2) /
\log_2(1+\phi) < \log_2(n+2) / \phi$, where we use the inequality
$\log_2(1 + \phi) > \phi$ for $\phi \in (0,1)$.
\end{proof}

\begin{theorem}
  Let $A=(V, \, o, \, d, \, \dbar, \, s_{even}, \, s_{odd})$ be a terminating \arr{} instance with $|V| = n$.
  $A$ can be decided in time $\bigO(p(n) n^{1.633\sqrt{n}})$, for some polynomial $p$.
\end{theorem}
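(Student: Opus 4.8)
The plan is to instantiate the framework of Section~\ref{sec:framework} with a carefully tuned $\phi$-set. Given a terminating instance $A$ with $|V| = n$, I would fix a parameter $\phi \in (0,1)$, construct a $\phi$-set $S$ via Lemma~\ref{lem:phi-set}, and then run the Tarski fixed-point search from Lemma~\ref{lem:tarski} for the monotone function $D$. The running time splits into the number of evaluations of $D$ (which grows with $k := |S|$) times the cost of a single evaluation (which grows with the shortest-path bound $\ell$), and since Lemma~\ref{lem:phi-set} gives $k \le \phi(n+2)$ and shortest-path length $\ell \le \log_2(n+2)/\phi$ from every vertex to $S\cup\dest$, choosing $\phi$ to balance the two exponents will yield the stated bound.

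Correctness is essentially read off from the cited results. Building $S$ costs $\bigO(n)$ (Lemma~\ref{lem:phi-set}); with $N = 2^n$, Lemma~\ref{lem:tarski} then finds, using $\bigO(\log^{2\ceil{k/3}} N) = \bigO(n^{2\ceil{k/3}})$ evaluations of $D$, a fixed point $\hat w$ of $D$ that is also a fixed point of $F$, such that the flow-minimal candidate switching flow produced by the Multi-Run procedure on $(S, \hat w)$ is an actual switching flow of $A$. By Theorem~\ref{theorem:dohrau}, the destination that absorbs its unit of flow is exactly the output of the Run Procedure, so reporting it decides whether $A$ is a YES or a NO instance. It remains to bound the running time.

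For the cost of one evaluation $D(w)$, I would run the greedy version of the Multi-Run procedure and cap the $k$ returned inflow values at $N$. Each $w_i \le N = 2^n$, so $W = 1 + \sum_i w_i \le 1 + k\cdot 2^n$ and hence $\ln W = \bigO(n)$; Lemma~\ref{lem:fast_m_termination} then bounds the number of loop iterations by $(\ln W + n)(n-k)\big((n-\ell+2)2^\ell - 2\big) = \mathrm{poly}(n)\cdot 2^\ell$. All numbers handled during the run stay below $W = 2^{\bigO(n)}$, so each iteration does $\mathrm{poly}(n)$ bit work, and one evaluation of $D$ costs $\mathrm{poly}(n)\cdot 2^\ell$. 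Multiplying by the number of evaluations, and adding the preprocessing and one final Multi-Run run to extract the switching flow, the total running time is $\mathrm{poly}(n)\cdot n^{2\ceil{k/3}}\cdot 2^\ell$.

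Finally I would choose $\phi$. Using $2\ceil{k/3} \le \tfrac{2k}{3} + 2 \le \tfrac{2}{3}\phi(n+2) + 2$ and $\ell \le \log_2(n+2)/\phi$, the base-$2$ logarithm of $n^{2\ceil{k/3}}\cdot 2^\ell$ is at most
\[
  \frac{2}{3}\,\phi\,(n+2)\log_2 n + \frac{\log_2(n+2)}{\phi} + 2\log_2 n,
\]
and the first two terms are minimized at $\phi = \phi^* := \sqrt{\tfrac{3\log_2(n+2)}{2(n+2)\log_2 n}}$, which lies in $(0,1)$ for $n$ large, with value $2\sqrt{\tfrac{2}{3}(n+2)\log_2 n\,\log_2(n+2)}$. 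A short estimate shows this equals $\sqrt{\tfrac{8}{3}}\,\sqrt n\,\log_2 n$ up to an additive $\bigO(1)$ (the $(n+2)$-vs-$n$ and $\log_2(n+2)$-vs-$\log_2 n$ discrepancies being harmless), so the exponent is $\sqrt{8/3}\,\sqrt n\,\log_2 n + \bigO(\log n)$; since $\sqrt{8/3} = 1.6329\ldots < 1.633$, the total running time is $\bigO\!\big(p(n)\cdot n^{1.633\sqrt n}\big)$, absorbing the $2^{\bigO(\log n)}$ factor (from the rounding of $\ceil{k/3}$, the $2\log_2 n$ term, and the additive $\bigO(1)$) and finitely many small instances into the polynomial $p$. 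I expect the only delicate point to be the elementary but careful verification that $2\sqrt{\tfrac23(n+2)\log_2 n\log_2(n+2)} \le \sqrt{8/3}\,\sqrt n\,\log_2 n + \bigO(1)$; everything else is a direct composition of Lemmas~\ref{lem:phi-set}, \ref{lem:fast_m_termination} and~\ref{lem:tarski}.
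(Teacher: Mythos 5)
Your proposal is correct and follows essentially the same route as the paper: build a $\phi$-set via Lemma~\ref{lem:phi-set}, run the Tarski fixed-point search of Lemma~\ref{lem:tarski} with each $D$-evaluation done by the greedy Multi-Run procedure (Lemma~\ref{lem:fast_m_termination}), and balance the two exponents by taking $\phi = \Theta(1/\sqrt{n})$. The only cosmetic difference is that you optimize $\phi$ exactly (including the logarithmic factors) where the paper simply sets $\phi = \sqrt{3}/\sqrt{2n}$; both choices give the same $n^{(\sqrt{8/3}+o(1))\sqrt{n}}$ bound.
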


\begin{proof}
  By Lemma~\ref{lem:phi-set}, we can find a $\phi$-set $S$ in $\bigO(n)$ time, for any $\phi \in (0,1)$.
  As $|S| \leq \phi \cdot (n+2)$, by Lemma~\ref{lem:tarski}, we can then decide $A$ with $\bigO(n^{2\lceil\phi \cdot (n+2)/3\rceil})$ evaluations of the function $D$. 
  Each evaluation in turn requires us to evaluate the function $F$ in Theorem~\ref{thm:unique_x}~(ii) for a given $w\in\{0,1,\ldots,2^n\}^{|S|}$.
  We can do this by applying Algorithm~\ref{algo:multi_run_procedure} (Multi-Run Procedure).
  By Lemma~\ref{lem:fast_m_termination} and the definition of a $\phi$-set in Lemma~\ref{lem:phi-set}, running this algorithm in a greedy fashion requires at most $(\ln W+n)(n-|S|)((n-\ell +2)2^\ell - 2)$ iterations, where $W = 1+ \sum_{i=1}^{|S|} w_i$ and $\ell = \log_2(n+2) / \phi$.
  Further, from the choice of $w$, we have $W \leq 2^n \phi (n+2) + 1$.
  Therefore, the number of iterations is $\bigO(q(n) n^{1/\phi})$ for some polynomial $q$.
  At each iteration, we need to find the vertex with the highest number of waiting trains, as stated in Lemma~\ref{lem:fast_m_termination}, and move the trains from the chosen vertex.
  All these operations take polynomial time.
  
  In total, the runtime of the whole process is $\bigO(n^{2\lceil\phi \cdot (n+2)/3\rceil}\cdot p(n) n^{1 / \phi})$ for some polynomial $p$.
  Choosing $\phi = \sqrt{3}/\sqrt{2n}$, the runtime becomes $\bigO(p(n) n^{1.633\sqrt{n}})$.
\end{proof}

\section{FEEDBACK VERTEX SETS}\label{sec:feedback}
In the previous section, we used our framework to obtain an improved algorithm for \arr\ in general. In this section, we will instantiate the framework differently to obtain a polynomial-time algorithm for a certain subclass of \arr.

A subset $S \subseteq V$ of vertices in a directed graph
$G = (V, \, E)$ is called a \emph{feedback vertex set} if and only if
the subgraph induced by $V \setminus S$ is acyclic (i.e.\ it contains
no directed cycle). Karp~\cite{Karp1972} showed that the problem of
finding a smallest feedback vertex set is $\NP$-hard. However, there
exists a parameterized algorithm by Chen et al.\@~\cite{Chen2008} which
can find a feedback vertex set of size $k$ in time
$\bigO(n^44^kk^3k!)$ in a directed graph on $n$ vertices, or report
that no such set exists.

If we apply Theorem~\ref{thm:unique_x} with a feedback vertex set
$S$, it turns out that we can compute the Multi-Run profile in
polynomial time, meaning that we get a polynomial-time algorithm for
\arr\ if there is a feedback vertex set of constant size $k$.

\begin{theorem}
  Let $A=(V, \, o, \, d, \, \dbar, \, s_{even}, \, s_{odd})$ be a
  terminating \arr{} instance with graph $G(A)$. If $G(A)$ has a
  feedback vertex set $S\subseteq V$ of size $k$ (assumed to be
  fixed as $n=|V|\rightarrow\infty$), then $A$ can be decided in
  time $\bigO(n^4 + n^{2(\lceil k/3\rceil+1)})$.
\end{theorem}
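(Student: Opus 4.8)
The plan is to apply the general framework from Section~\ref{sec:framework} with $S$ taken to be a feedback vertex set of $G(A)$, and to show that in this case each evaluation of the function $F$ (equivalently $D$) can be carried out in polynomial time. First I would invoke the parameterized algorithm of Chen et al.~\cite{Chen2008} to find a feedback vertex set $S$ of size $k$ in time $\bigO(n^4 4^k k^3 k!)=\bigO(n^4)$, since $k$ is fixed. Then I would observe that because $V\setminus S$ induces an acyclic subgraph of $G(A)$, any train in the Multi-Run procedure can visit each vertex of $V\setminus S$ at most once before it terminates (in $S\cup\dest$): indeed, following the Run Procedure semantics with $\tau=1$ as in Lemma~\ref{lem:m_termination}, a train's trajectory through $V\setminus S$ is a directed path in an acyclic graph and hence has length at most $n-k$. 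So in the language of Lemma~\ref{lem:m_termination}, we have $\ell\le n-k$, but more importantly the ``$2^\ell$'' blow-up disappears entirely: running all trains individually, each train traverses at most $n-k+1$ edges, so a single evaluation of $F$ needs at most $(W)(n-k+1)$ edge traversals where $W=1+\sum_i w_i$. Since $w\in\{0,1,\ldots,N\}^k$ with $N=2^n$, this is still exponential in $n$, so instead I would run the Multi-Run procedure in the greedy (or round-robin) fashion of Lemma~\ref{lem:fast_m_termination}: with $T:=(n-\ell+2)2^\ell-2$ replaced here by the acyclic bound — each train traverses at most $n-k+1 =: T'$ edges — the same pebble argument gives termination within $\bigO((\ln W + n)(n-k)T')=\bigO(n^3)$ iterations, each of which takes polynomial time (finding the vertex with the most waiting trains and moving them all at once, with train counts stored as $\bigO(n)$-bit integers). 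Hence a single evaluation of $D$ costs $\mathrm{poly}(n)$ time.

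With a polynomial-time evaluation of $D$ in hand, I would then apply Lemma~\ref{lem:tarski}: the capped function $D:\{0,1,\ldots,N\}^k\to\{0,1,\ldots,N\}^k$ with $N=2^n$ is monotone, and the algorithm of Fearnley, P\'alv\"olgyi and Savani~\cite{Fearnley2020} finds a fixed point $\hat w$ with $\bigO(\log^{2\lceil k/3\rceil} N) = \bigO(n^{2\lceil k/3\rceil})$ evaluations of $D$. By Lemma~\ref{lem:tarski}, $\hat w$ is also a fixed point of $F$, so the flow-minimal candidate switching flow produced by the Multi-Run procedure on input $(S,\hat w)$ is an actual switching flow for $A$; reading off whether it is a switching flow to $d$ or to $\dbar$ decides $A$ by Theorem~\ref{theorem:dohrau}. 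Adding up: $\bigO(n^4)$ to find the feedback vertex set, plus $\bigO(n^{2\lceil k/3\rceil})$ evaluations of $D$ at $\mathrm{poly}(n)$ cost each; bookkeeping the polynomial factors so that the per-evaluation cost and the final Multi-Run all fit, the total is $\bigO(n^4 + n^{2(\lceil k/3\rceil + 1)})$, as claimed.

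The step I expect to require the most care is pinning down the exact polynomial exponent in the per-evaluation cost so that it combines with the $\bigO(n^{2\lceil k/3\rceil})$ fixed-point evaluations to give precisely $\bigO(n^{2(\lceil k/3\rceil+1)})$ rather than something with a larger additive polynomial or a worse exponent. This amounts to a careful accounting in Lemma~\ref{lem:fast_m_termination} specialized to the acyclic case — bounding the number of greedy iterations, the cost per iteration (including arithmetic on $\bigO(n)$-bit train counts), and the size of $W$ — and checking that all of this is absorbed into a single extra factor of $n^2$ beyond the $n^{2\lceil k/3\rceil}$ from the Tarski solver, with the $\bigO(n^4)$ term coming solely from computing the feedback vertex set.
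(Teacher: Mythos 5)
Your overall strategy is the paper's: find a feedback vertex set $S$ with Chen et al., run the Tarski machinery of Lemma~\ref{lem:tarski} with $\bigO(n^{2\lceil k/3\rceil})$ evaluations of $D$, and use acyclicity of $G(A)[V\setminus S]$ to make each evaluation of $F$ cheap; the correctness part (fixed point of $D$ is a fixed point of $F$, the resulting candidate switching flow is a switching flow, conclude via Theorem~\ref{theorem:dohrau}) is fine. The gap is in the per-evaluation cost, exactly the step you flagged but did not resolve. You evaluate $F$ with the \emph{greedy} schedule of Lemma~\ref{lem:fast_m_termination}, replacing $T$ by the acyclic bound $T'=\bigO(n)$. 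That analysis only yields $(\ln W+n)(n-k)T'=\bigO(n^3)$ iterations (since $\ln W=\Theta(n)$ when $w$ has entries up to $2^n$), and each iteration costs at least $\Omega(n)$ even in the arithmetic model (scanning for $\arg\max_v t[v]$, with $\bigO(n)$-bit counts). So your per-evaluation cost is $\bigO(n^4)$ or worse, and the total becomes $\bigO(n^4+n^{2\lceil k/3\rceil+4})$, not the claimed $\bigO(n^4+n^{2(\lceil k/3\rceil+1)})$, which needs $\bigO(n^2)$ per evaluation. The hoped-for ``absorption into a single extra factor of $n^2$'' is not justified by the greedy bound, and greedy by itself does not obviously do better on acyclic residual graphs: it may pick a vertex before all of its in-neighbors have been emptied and then have to reprocess it, so nothing in your argument caps the iteration count at $\bigO(n)$.

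The missing idea is to exploit acyclicity in the \emph{schedule}, not just in the path-length bound: process the vertices of $V\setminus S$ in topological order of $G(A)[V\setminus S]$, always choosing $\tau=t[v]$. Then no train is ever sent back to an already-processed vertex, so every vertex of $V\setminus S$ is picked at most once, i.e.\ at most $n-k$ iterations, each doing a constant number of operations on $\bigO(n)$-bit numbers, hence $\bigO(n)$ time per iteration and $\bigO(n^2)$ per evaluation of $F$. (Correctness of this nonstandard schedule is exactly what Theorem~\ref{thm:unique_x}~(i) guarantees: every way of running the Multi-Run procedure produces the same profile.) With this replacement your accounting closes: $\bigO(n^4)$ for the feedback vertex set plus $\bigO(n^{2\lceil k/3\rceil})\cdot\bigO(n^2)$ for the fixed-point search, which is the stated $\bigO(n^4+n^{2(\lceil k/3\rceil+1)})$. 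As written, your proof establishes a polynomial bound for fixed $k$, but not the specific exponent claimed in the theorem.
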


\begin{proof}
  Using the algorithm by Chen et al.\@~\cite{Chen2008}, we can find a
  feedback vertex set $S$ in $\bigO(n^4)$ time if it exists.  According to
  Lemma~\ref{lem:tarski}, we can then decide $A$ with
  $\bigO(n^{2\lceil k/3\rceil})$
  evaluations of the function $D$. Each evaluation in turn requires us
  to evaluate the function $F$ in Theorem~\ref{thm:unique_x}~(ii) for
  a given $w\in\{0,1,\ldots,2^n\}^k$. To do this, we apply
  Algorithm~\ref{algo:multi_run_procedure} (Multi-Run
  Procedure) where we pick vertices $v\in V\setminus S$ in topological
  order and choose $\tau=t[v]$ always. As we never send any
  trains back to vertices that have previously been picked, we 
  terminate within $n-k$ iterations, each of which can
  be performed in time $\bigO(n)$ as it involves $\bigO(n)$-bit
  numbers. Hence, $F(w)$ can be computed in $\bigO(n^2)$ time. 
  The claimed runtime follows.
\end{proof}

We remark that even if $k$ is not constant, we can still beat the
subexponential algorithm in Section~\ref{sec:subex}, as long as
$k=\bigO(n^{\alpha})$ for some $\alpha<1/2$.

\subsection*{Acknowledgment} We thank G\"unter Rote for pointing out
an error in an earlier version of the manuscript.

\bibliography{references}

\begin{thebibliography}{10}

\bibitem{Ani2020}
{\sc Ani, J., Demaine, E.~D., Hendrickson, D.~H., and Lynch, J.}
\newblock Trains, games, and complexity: 0/1/2-player motion planning through
  input/output gadgets, 2020.

\bibitem{Chen2008}
{\sc Chen, J., Liu, Y., Lu, S., O'Sullivan, B., and Razgon, I.}
\newblock A fixed-parameter algorithm for the directed feedback vertex set
  problem.
\newblock {\em J. ACM 55}, 5 (2008), Art. 21, 19.

\bibitem{Con}
{\sc Condon, A.}
\newblock The complexity of stochastic games.
\newblock {\em Information and Computation 96}, 2 (1992), 203 -- 224.

\bibitem{Cooper2007}
{\sc Cooper, J., Doerr, B., Spencer, J., and Tardos, G.}
\newblock Deterministic random walks on the integers.
\newblock {\em European Journal of Combinatorics 28}, 8 (2007), 2072 -- 2090.
\newblock EuroComb ’05 - Combinatorics, Graph Theory and Applications.

\bibitem{Dang2020}
{\sc Dang, C., Qi, Q., and Ye, Y.}
\newblock Computations and complexities of {T}arski's fixed points and
  supermodular games, 2020.

\bibitem{Dohrau2017}
{\sc Dohrau, J., G\"{a}rtner, B., Kohler, M., Matou\v{s}ek, J., and Welzl, E.}
\newblock A{RRIVAL}: a zero-player graph game in {$NP \cap coNP$}.
\newblock In {\em A journey through discrete mathematics}. Springer, Cham,
  2017, pp.~367--374.

\bibitem{Fearnley2018}
{\sc Fearnley, J., Gairing, M., Mnich, M., and Savani, R.}
\newblock Reachability switching games.
\newblock In {\em 45th {I}nternational {C}olloquium on {A}utomata, {L}anguages,
  and {P}rogramming}, vol.~107 of {\em LIPIcs. Leibniz Int. Proc. Inform.}
  Schloss Dagstuhl. Leibniz-Zent. Inform., Wadern, 2018, pp.~Art. No. 124, 14.

\bibitem{UEOPL}
{\sc Fearnley, J., Gordon, S., Mehta, R., and Savani, R.}
\newblock Unique end of potential line.
\newblock {\em J. Comput. System Sci. 114\/} (2020), 1--35.

\bibitem{Fearnley2020}
{\sc Fearnley, J., P{\'a}lv{\"o}lgyi, D., and Savani, R.}
\newblock A faster algorithm for finding {T}arski fixed points, 2020.

\bibitem{Gaertner2018}
{\sc G\"{a}rtner, B., Hansen, T.~D., Hub\'{a}\v{c}ek, P., Kr\'{a}l, K., Mosaad,
  H., and Sl\'{\i}vov\'{a}, V.}
\newblock A{RRIVAL}: next stop in {CLS}.
\newblock In {\em 45th {I}nternational {C}olloquium on {A}utomata, {L}anguages,
  and {P}rogramming}, vol.~107 of {\em LIPIcs. Leibniz Int. Proc. Inform.}
  Schloss Dagstuhl. Leibniz-Zent. Inform., Wadern, 2018, pp.~Art. No. 60, 13.

\bibitem{Hung}
{\sc G{\"a}rtner, B., and Hoang, H.~P.}
\newblock {ARRIVAL} with two vertices per layer.
\newblock Manuscript in preparation, 2021.

\bibitem{Holroyd2010}
{\sc Holroyd, A.~E., and Propp, J.}
\newblock Rotor walks and {M}arkov chains.
\newblock In {\em Algorithmic probability and combinatorics}, vol.~520 of {\em
  Contemp. Math.} Amer. Math. Soc., Providence, RI, 2010, pp.~105--126.

\bibitem{Jur}
{\sc Jurdziński, M.}
\newblock Deciding the winner in parity games is in {UP} $\cap$ {co-UP}.
\newblock {\em Information Processing Letters 68}, 3 (1998), 119 -- 124.

\bibitem{Karp1972}
{\sc Karp, R.~M.}
\newblock Reducibility among combinatorial problems.
\newblock In {\em Complexity of computer computations ({P}roc. {S}ympos., {IBM}
  {T}homas {J}. {W}atson {R}es. {C}enter, {Y}orktown {H}eights, {N}.{Y}.,
  1972)\/} (1972), pp.~85--103.

\bibitem{Karthik2017}
{\sc Karthik, C.~S.}
\newblock Did the train reach its destination: the complexity of finding a
  witness.
\newblock {\em Inform. Process. Lett. 121\/} (2017), 17--21.

\bibitem{Leighton1999}
{\sc Leighton, T., and Rao, S.}
\newblock Multicommodity max-flow min-cut theorems and their use in designing
  approximation algorithms.
\newblock {\em J. ACM 46}, 6 (1999), 787--832.

\bibitem{Priezzhev1996}
{\sc Priezzhev, V.~B., Dhar, D., Dhar, A., and Krishnamurthy, S.}
\newblock Eulerian walkers as a model of self-organized criticality.
\newblock {\em Phys. Rev. Lett. 77\/} (1996), 5079--5082.

\bibitem{Rote}
{\sc Rote, G.}
\newblock Personal communication, 2020.

\bibitem{Tarski1955}
{\sc Tarski, A.}
\newblock A lattice-theoretical fixpoint theorem and its applications.
\newblock {\em Pacific J. Math. 5\/} (1955), 285--309.

\bibitem{Trevisan2005}
{\sc Trevisan, L.}
\newblock Approximation algorithms for unique games.
\newblock In {\em 46th Annual IEEE Symposium on Foundations of Computer Science
  (FOCS'05)\/} (2005), pp.~197--205.

\bibitem{PZ}
{\sc Zwick, U., and Paterson, M.}
\newblock The complexity of mean payoff games on graphs.
\newblock {\em Theoretical Computer Science 158\/} (1996), 343--359.

\end{thebibliography}
\bibliographystyle{acm}

\end{document}